%%%%%%%%%%%%%%%%%%%%%%% file template.tex %%%%%%%%%%%%%%%%%%%%%%%%%
%
% This is a general template file for the LaTeX package SVJour3
% for Springer journals.          Springer Heidelberg 2010/09/16
%
% Copy it to a new file with a new name and use it as the basis
% for your article. Delete % signs as needed.
%
% This template includes a few options for different layouts and
% content for various journals. Please consult a previous issue of
% your journal as needed.
%
%%%%%%%%%%%%%%%%%%%%%%%%%%%%%%%%%%%%%%%%%%%%%%%%%%%%%%%%%%%%%%%%%%%
%
% First comes an example EPS file -- just ignore it and
% proceed on the \documentclass line
% your LaTeX will extract the file if required
%\begin{filecontents*}{example.eps}
%%!PS-Adobe-3.0 EPSF-3.0
%%%BoundingBox: 19 19 221 221
%%%CreationDate: Mon Sep 29 1997
%%%Creator: programmed by hand (JK)
%%%EndComments
%gsave
%newpath
%  20 20 moveto
%  20 220 lineto
%  220 220 lineto
%  220 20 lineto
%closepath
%2 setlinewidth
%gsave
%  .4 setgray fill
%grestore
%stroke
%grestore
%\end{filecontents*}
%
\RequirePackage{fix-cm}
\documentclass[twocolumn]{svjour3}          % twocolumn
\smartqed  % flush right qed marks, e.g. at end of proof
\usepackage{graphicx}
%
% \usepackage{mathptmx}      % use Times fonts if available on your TeX system
%
% insert here the call for the packages your document requires
%\usepackage{latexsym}
% etc.
%
% please place your own definitions here and don't use \def but
% \newcommand{}{}
%
% Insert the name of "your journal" with
% \journalname{myjournal}
%
\usepackage{xspace}
\usepackage{amssymb,amsmath}
\usepackage{url}
\usepackage[]{hyperref}

\newtheorem{observation}{Observation}
\newtheorem{thmone}{Theorem}

\def\push{\texttt{push}\xspace}
\def\pull{\texttt{pull}\xspace}
\def\ex{\texttt{push\&pull}\xspace}
\providecommand{\abs}[1]{\lvert#1\rvert}
\def\whp{w.h.p.\xspace}

\begin{document}

\title{Breaking the $\log n$ Barrier on Rumor Spreading
\thanks{
An extended abstract of this work appeared in \cite{avin2013faster}. The work of the 
second author was partially 
supported by the Austrian Science Fund (FWF) under contract P25214-N23 \emph{``Analysis of 
Epidemic Processes and Algorithms in Large Networks''}. The main result of this paper
solves an open problem 
presented at Dagstuhl Seminar 13042 \emph{``Epidemic Algorithms and Processes: From Theory to 
Applications''}.}
%\thanks{Grants or other notes
%about the article that should go on the front page should be
%placed here. General acknowledgments should be placed at the end of the article.}
}
%\subtitle{Do you have a subtitle?\\ If so, write it here}

%\titlerunning{Short form of title}        % if too long for running head

\author{Chen Avin         \and
            Robert Els\"asser %etc.
}

%\authorrunning{Short form of author list} % if too long for running head

\institute{Chen Avin \at
             Communication Systems Engineering\\
              Ben Gurion University of the Negev, Israel\\
              \email{avin@cse.bgu.ac.il}           %  \\
%             \emph{Present address:} of F. Author  %  if needed
           \and
           Robert Els\"asser \at
            Department of Computer Sciences\\
            University of Salzburg, Austria \\
            \email{elsa@cosy.sbg.ac.at} 
 }

\date{Received: date / Accepted: date}
% The correct dates will be entered by the editor

%\title{Breaking the $\log n$ Barrier on Rumor Spreading}
%\title{Randomize Rumor Spreading in $O(\sqrt{\log n})$}
%\title{Faster Rumor Spreading: Breaking the $\log n$ Barrier\thanks{The work of the 
%second author was partially 
%supported by the Austrian Science Fund (FWF) under contract P25214-N23 \emph{``Analysis of 
%Epidemic Processes and Algorithms in Large Networks''}. The main result of this paper
%solves an open problem 
%presented at Dagstuhl Seminar 13042 \emph{``Epidemic Algorithms and Processes: From Theory to 
%Applications''}.}}
%\author{Chen Avin\inst{1} and Robert Els\"asser\inst{2}}
%\institute{Communication Systems Engineering, Ben Gurion University of the Negev, Israel \\
%\email{avin@cse.bgu.ac.il}
%\and 
%Department of Computer Sciences, University of Salzburg, Austria\\
%\email{elsa@cosy.sbg.ac.at}}

\maketitle

\begin{abstract}
$O(\log n)$ rounds has been a well known upper bound for rumor spreading using \ex in 
the \emph{random phone call} model (i.e., uniform gossip
in the complete graph). A matching lower bound of $\Omega (\log n)$ is also known for 
this special case.
Under the  assumption of this model and with a natural addition that nodes can call a 
partner once they learn its address (e.g., its IP address) we present a new distributed, 
address-oblivious and robust algorithm that uses \ex with pointer jumping to spread a 
rumor to all nodes in only $O(\sqrt{\log n})$ rounds, w.h.p. This algorithm can 
also cope with $F= 
O(n/2^{\sqrt{\log n}})$ node failures, in which case all but $O(F)$ nodes become 
informed within $O(\sqrt{\log n})$ rounds, w.h.p. 
%{\bf Chen: update with lower bound and new assumptions}
\end{abstract}

% Question to ask:
%1. lower bound push-pull
%2. Adversarial failures 
%3. should we add a proof overview 

%\newpage

\section{Introduction}
% Explain rumor spreading
Gossiping, or rumor-spreading, is a simple stochastic process for dissemination of information across a network. 
In a \emph{round} of gossip, \emph{each} node chooses a single, usually random, neighbor as its \emph{communication partner} according to a \emph{gossip algorithm} (e.g., selecting a random neighbor). Once a partner is chosen the node \emph{calls} its partner and a limited amount of data is transferred between the partners, as defined by the gossip \emph{protocol}. Three basic actions are considered in the literature: either the caller pushes information to its partner (\push), pulls information from the partner (\pull), or does both (\ex). In the most basic information dissemination task, a token or a rumor in placed arbitrary in the network and we are interested in the number of rounds and message transmissions until all nodes in the networks receive the rumor.
The selection of the protocol can lead to significant differences in the performance. Take for example the star graph, let nodes call a neighbor selected uniformly at random and assume the rumor is placed at one of the leafs. It is easy to see that both \push and \pull will require $\omega(n)$ rounds to complete the spreading of a single rumor while \ex will take only two rounds.

% citing application
Somewhat surpassingly, but by now well understood, randomized rumor-spreading turned out to be very efficient in terms of time and message complexity while keeping robustness to failures \cite{karp2000randomized,ES06}. In addition, this type of algorithms are very simple and distributed in nature so it is clear why gossip protocols have gained popularity in recent years and have found many applications both in communication networks and social networks.
To name a few examples: updating a database replicated at many sites~\cite{demers1987epidemic,karp2000randomized}, resource discovery~\cite{harchol1999resource}, computation of aggregate information~\cite{kempe03gossip}, multicast via network
coding~\cite{deb06algebraic}, membership services~\cite{gurevich2010correctness}, or the spread of influence and gossip in social networks~\cite{Kempe2003Maximizing,Chaintreau2008Opportunistic}.

% over view of pervious results 
In this paper we consider the most basic scenario, the \emph{random phone call model}~\cite{karp2000randomized}, where the underlying network is the complete graph and nodes can call a random neighbor according to some given distribution. In addition, the model requires the algorithm to be \emph{distributed} and \emph{address-oblivious}: it cannot use the address of the current communication partners to determine its state (for an exact definition see Section \ref{prelim}). For example this setting fits well to applications which require communication over the internet such as peer-to-peer protocols and database synchronization. A node can pick and call any (random or given) neighbor via its IP address, but it is desired to keep the algorithm address-oblivious otherwise it may have critical points of failure. For example agreeing before hand on a leader to contact (by its IP address) is \emph{not} an address-oblivious algorithm. Furthermore, such a protocol 
is also highly fragile, although it leads to efficient information spreading (as pointed out 
in the star graph example above).

The random phone call model was thoroughly studied in the literature starting with the work of Frieze and Gimmet~\cite{frieze1985shortest} and following by Pittel~\cite{pittel1987spreading} who proved an upper bound of $O(\log n)$ rounds  for \push in the complete graph. Demers et al.~\cite{demers1987epidemic} considered both \push and \pull  as a simple and decentralized way to disseminate information in a network and studied their rate of progress. Finally, Karp et al.~\cite{karp2000randomized} gave a detailed analysis for this model. They used \ex to optimize the message complexity and showed the robustness of the scheme. They proved that while using only \push the communication overhead is  $\Omega(n \log n)$, their algorithm %, the \emph{median-counter},  
only requires $O(n \log \log n)$ message transmissions by having a  
running time of $O(\log n)$, even under arbitrary oblivious failures.  Moreover they proved that any address-oblivious algorithm (that selects neighbors uniformly at random) will require  $\Omega(n \log \log n)$ message transmissions. 

\subsection{Our contribution}

We consider the same assumptions as in the random phone call model: the algorithm needs to be distributed, address-oblivious and it can select neighbors at random. In addition we use the fact that given an address of a node (e.g., its IP address) the caller can call directly on that address. 
%Furthermore, assume that a node may fail (at the beginning or during the algorithm) with probability 
%$O(1/(2^{\sqrt{\log n}}))$, independently.
%We note the this do not contradict the assumptions made in the random phone call model. 
This slight addition leads to a significant improvement in the number of rounds from $O(\log n)$ to  $O(\sqrt{\log n})$, but still keeps the algorithm robust. Furthermore, assume that a node may fail (at the beginning or during the algorithm is executed) with probability 
$O(1/2^{\sqrt{\log n}})$, independently. The main result of the paper is the following theorem:
%We are able to improve the upper bound of $O(\log n)$ and prove the following:

\begin{thmone}
\label{thm:intro}
At the end of the algorithm Jumping-Push-Pull (JPP), all but $O(F)$ nodes are informed \whp\footnote{In this paper with high probably or \whp is with probability at least $1-n^{-1-\Omega(1)}$}, where $F$ is the number of failed nodes (as described above). 
The algorithm has running time $O(\sqrt{\log n})$
and produces a {\bf bit} communication complexity of 
$O(n (\log^{3/2} n + b \cdot \log \log n))$, \whp, where $b$ is the 
bit length of the message.  
%The algorithm Jumping-Push-Pull (JPP) informs all nodes in $O(\sqrt{\log n})$ rounds 
%and {\bf bit} communication complexity of 
%$O(n (\log^{3/2} n + b \cdot \log \log n))$, \whp\footnote{In this paper with high probably or \whp is with probability at least $1-n^{-1-\Omega(1)}$}, where $b$ is the 
%bit length of the rumor.  
\end{thmone}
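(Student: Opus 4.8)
The plan is to decompose a run of JPP into a \push-with-pointer-jumping \emph{growth stage} followed by a short \pull \emph{finishing stage}, bound the running time of each, assemble the per-round message counts for the communication bound, and finally layer the failure analysis on top. Throughout I write $I_t$ for the set of informed nodes after round $t$. The central object is the potential $\Phi_t := \sqrt{\log|I_t|}$, and the heart of the argument is the growth lemma that, as long as $|I_t|\le n/\mathrm{polylog}(n)$, we have \whp
\[
|I_{t+1}| \;=\; |I_t|\cdot 2^{\Theta(\sqrt{\log |I_t|})},
\qquad\text{equivalently}\qquad
\Phi_{t+1}\ge \Phi_t+\Omega(1).
\]
The equivalence is the elementary computation $x_{t+1}=x_t+c\sqrt{x_t}\Rightarrow \sqrt{x_{t+1}}=\sqrt{x_t}+c/2+o(1)$ with $x_t=\log|I_t|$. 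Since $\Phi_0=\Theta(1)$ and we need $\Phi_t=\sqrt{\log n}$, this immediately yields that a $1/\mathrm{polylog}(n)$ fraction of the nodes is reached within $O(\sqrt{\log n})$ rounds. To prove the lemma I would track the structure produced by pointer jumping: a node that has participated for $k$ rounds holds a pointer of reach $2^{k}$ in the contact forest, so in round $t\approx\Phi_t=\sqrt{\log|I_t|}$ it exposes $2^{\Theta(\sqrt{\log|I_t|})}$ targets, and while $|I_t|$ is below a constant fraction a uniformly random target is uninformed with probability $1-o(1)$, so a constant fraction of exposed targets are genuinely new.

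The main obstacle is that the pointer-jumping structure introduces heavy dependencies: targets reached by different active nodes are correlated through shared chains, and a \push or \pull may collide with an already-informed node. I would control this by coupling the true process from below with an idealized branching process in which each active node independently spawns $2^{\Theta(\sqrt{\log|I_t|})}$ offspring into uniformly random locations, valid up to the first collision event; the expected number of collisions is $o(|I_t|)$ by a union bound over pairs while $|I_t|\le n/\mathrm{polylog}(n)$, so the coupling loses only a constant factor in the gain. For the \whp statement itself I would apply a bounded-differences/martingale inequality to the number of newly informed nodes in a round, using that a single random choice perturbs this count by at most the reach $2^{O(\sqrt{\log n})}$, which is small relative to the per-round gain and therefore yields concentration at the required level $1-n^{-1-\Omega(1)}$.

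Once $|I_t|\ge n/\mathrm{polylog}(n)$, reached at time $O(\sqrt{\log n})$, I would switch to the standard \pull analysis: the uninformed fraction $q_t$ obeys $q_{t+1}\le q_t^{2}(1+o(1))$ \whp, so $O(\log\log n)$ further rounds drive it below $n^{-\Omega(1)}$. Since $\log\log n=o(\sqrt{\log n})$, the total running time remains $O(\sqrt{\log n})$. For robustness I would argue that a node is still uninformed at the end only if \emph{every} one of its independent contact opportunities in the \pull stage led to a failed node; as a node fails with probability $O(2^{-\sqrt{\log n}})$ and the \pull stage supplies $\Omega(\log\log n)$ essentially independent opportunities, the expected number of such nodes is $O(F)$, which I would then concentrate about its mean. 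I would also verify that failures do not cascade: the redundancy of the \pull stage reconnects the descendants of a failed node to other informed sources \whp, so a single failure removes only its own $O(1)$ contribution rather than a whole subtree.

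Finally I would tally bits. The pointer-jumping stage runs for $O(\sqrt{\log n})$ rounds, in each of which every node sends $O(1)$ messages carrying $O(\log n)$-bit addresses, giving $O(n\log n)$ bits per round and $O(n\log^{3/2}n)$ bits in total. The actual $b$-bit rumor is transmitted only during the $O(\log\log n)$-round \pull stage, contributing $O(n\,b\log\log n)$ bits. Summing yields the claimed $O\!\bigl(n(\log^{3/2}n+b\log\log n)\bigr)$, and since failures only remove transmissions, the bound is unaffected. I expect the growth lemma together with its concentration proof to be by far the most delicate part; the \pull finishing stage, the communication accounting, and the failure bookkeeping are comparatively routine once the growth rate and its tail behavior are established.
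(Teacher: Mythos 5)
Your skeleton (fast growth phase up to a $2^{-O(\sqrt{\log n})}$ fraction, standard finishing phase, bit accounting, failures on top) parallels the paper at a coarse level, but the growth mechanism you propose is not the one JPP uses, and as described it does not work in this model. In the paper, pointer jumping is \emph{not} used to spread the rumor at all: Phase 2 spends $O(\sqrt{\log n})$ rounds using pointer jumping solely to build a static two-level virtual topology with $\approx n/2^{\sqrt{\log n}}$ leaders, each shared by between $\Omega(2^{\sqrt{\log n}}/\log n)$ and $O(2^{3.1\sqrt{\log n}})$ connectors (Lemmas \ref{cycle}--\ref{no-followers}); only then does Phase 3 run \ex on this fixed hierarchy, gaining a \emph{uniform} factor $2^{\sqrt{\log n}/2}$ per two rounds (Lemma \ref{lem-growth}) because all connectors sharing a newly informed leader \pull from it simultaneously. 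Your accelerating-growth potential $\Phi_t=\sqrt{\log|I_t|}$ with per-round rate $2^{\Theta(\sqrt{\log|I_t|})}$ rests on the claim that a node whose pointer has reach $2^k$ ``exposes $2^{\Theta(\sqrt{\log|I_t|})}$ targets'' in a round. That is the gap: pointer jumping gives a node \emph{one} far-away address, not exponentially many, and every node still places exactly one call per round, so a single informed node has per-round fan-out $O(1)$ plus incoming \pull{}s. Exponential fan-out can only come from many distinct uninformed connectors calling the same hub in the same round, which requires precisely the leader/connector structure your proof never constructs. Your branching-process coupling therefore models a process the algorithm cannot realize. Separately, your concentration step fails quantitatively: Azuma with Lipschitz constant $2^{O(\sqrt{\log n})}$ gives deviations of order $2^{O(\sqrt{\log n})}\sqrt{m\log n}$, which swamps the per-round gain whenever $|I_t|\le 2^{O(\sqrt{\log n})}$, i.e.\ for a constant fraction of the rounds; the paper avoids this by confining randomness to the topology construction (analyzed with balls-into-bins and a $4$-Lipschitz martingale over sets of size $\Omega(2^{\sqrt{\log n}}/\log n)$) and by a Phase-0 \push warm-up to $\log^2 n$ informed nodes.

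The failure analysis also does not match the model. In the paper, every node that contacts a failed node directly \emph{or through a pointer-jumping chain} is itself considered failed, so a single failure can kill a chain of length up to $O(2^{\sqrt{\log n}}\log n)$; your assertion that ``failures do not cascade'' assumes this away. The paper copes by showing $|R(v)|=O(2^{\sqrt{\log n}})$ with constant probability, so a chain survives with constant probability given the $O(1/2^{\sqrt{\log n}})$ per-node failure rate, and then uses five independent sub-phases plus the Azuma--Hoeffding argument of Lemma \ref{no-followers} to guarantee $\Theta(n)$ connectors still acquire two good leaders; the final ``all but $O(F)$'' claim is inherited from the robustness of the median-counter algorithm of Karp et al.\ in Phase 4, via a coupling of the informed-set trajectories, not from counting independent \pull opportunities. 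Your bit-complexity tally happens to land on the right expression, but the time and correctness claims need the hierarchy-based growth lemma and the chain-survival argument to go through.
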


Clearly, if there are no failures (i.e., $F=0$), then all nodes become informed in
the number of rounds given in Theorem \ref{thm:intro}.
As mentioned, we inform all nodes in $O(\sqrt{\log n})$ rounds vs.~$O(\log n)$ rounds achieved by 
the algorithm of Karp et al. Our message complexity is $O(n\sqrt{\log n})$ compared to  $O(n \log \log n)$ and if the rumor is of bit length $b = \Omega(\frac{\log^{3/2} n}{\log \log n})$ both of the algorithms bit complexity is $\Omega(b \cdot n \log \log n)$. Moreover, if there are $\Omega(n)$ 
messages to be distributed in the network, then the first term in the 
expression describing the bit communication complexity is amortized over the total number 
of message transmissions (cf.~\cite{karp2000randomized}), and we obtain the 
same communication overhead as in \cite{karp2000randomized}.

Few words on the basic idea of the algorithm are in place. In a nutshell our approach has two phases: first we try to build an infrastructure, a virtual topology, that is efficient for \ex. Second, we perform a simple \ex on the virtual topology. The running time is the combination of both these tasks.
For example, constructing a random star would be preferable since the second phase will then take only a constant number of rounds, but as it turns out the cost of the first phase, in this case, is too high. Interestingly, our algorithm results in balancing these two phases where each task requires $O(\sqrt{\log n})$ rounds. Instead of a star with a single leader we build a virtual topology with about random $n/2^{\sqrt{\log n}}$ leaders and each leader is connected to about $2^{\sqrt{\log n}}$ nodes we call \emph{connectors} (a node is either a leader or a connector). Each connector is then linked to two leaders after a process of pointer jumping~\cite{leighton1992introduction} . This simple 2-level hierarchy results in a very efficient information spreading. Leaders are a source of fast \pull mechanism and connectors are essential for fast spreading among leaders using \push. Our approach was motivated from similar phenomena in social networks \cite{fountoulakis2012ultra,Avin2012From}
(see the related work section for a more detailed description of these results).

\noindent {\bf Journal version update:} Motivated by the conference version of this paper~\cite{avin2013faster}, Haeupler and Malkhi~\cite{haeupler14optimal} improved our bound and presented an elegant algorithm that solves the problem we study here in  $O(\log \log n)$ rounds together with a macthing lower bound. Nevertheless we think our work contributes to the understanding of the gossiping process and may be useful in extension of the model to general graphs.

%{\bf explain about the motivation that came from star and pointer jumping}

%We use constant address - compare to Karp.
%We use P2P motivation.
%Motivation - Star and pish-pull

%To do: \\
%1. failure model \\
%2. approximation of n \\
%3. resource discovery model \\

\section{Preliminaries - Rumor Spreading}
\label{prelim}
Let $G(V, E)$ be an undirected graph, with $V$ the set of nodes and
$E$ the set of edges. Let $n = \abs{V}$ and $m = \abs{E}$. For $v \in V$, let $N(v) = \{u \in V \mid (vu) \in E\} $ the
set of neighbors of $v$ and $d(v) = \abs{N(v)}$ the degree of
$v$.
Initially a single arbitrary node holds a rumor (i.e., a token) of size $b$ bits; then the process of rumor-spreading (or gossiping) progresses in synchronous \emph{rounds}. At each round, each node $v$ selects a single \emph{communication partner}, $u \in N(v)$ from its neighbors and $v$ calls $u$. The method by which $v$ choses $u$ is called  the \emph{goosip algorithm}. 
The algorithm is called \emph{address-oblivious} if $v$'s state in round $t$ does not depend on the addresses of its communication partners at time $t$. 
Meaning, any decision about if, how and what to send in the current round is made before the current round. Nevertheless, $v$'s state can still depend on the addresses of its communication partners from previous rounds~\cite{karp2000randomized}. 
%When $v$ selects its  partner independently of the \emph{ids} (or addresses) of its neighbors, e.g., by selecting a random neighbor, . Still, address-oblivious algorithms can use previous selected partners in their current selection~\cite{karp2000randomized}. 

Randomized gossip is maybe the most basic address-oblivious algorithm, in particular, when the communication partners are selected uniformly at random the process is known as \emph{uniform gossip}.
A well studied such case is the \emph{random phone call model}~\cite{karp2000randomized} where $G$ is the complete graph and $u$ is selected u.a.r from $V \setminus v$. 
Upon selecting a communication partner the \emph{gossip protocol} defines the way and which information is transferred between $v$ and $u$. Three basic options are considered to deliver information between communication partners: \push, \pull and \ex. In \push the calling node, $v$, sends  a message 
%with the rumor or other information if it has it 
to the called node $u$, in \pull a message is only transferred the other way (if the called node, $u$, has what to send) and in \ex each of the communication partners sends a message to the node at the other end of the edge. The content of the messages is defined by the protocol and can contain only the rumor (in the simplest case) or additional information like counters or state information (e.g., like in~\cite{karp2000randomized}). 

After selecting the graph (or graph model), the gossip algorithm and protocol, the main metrics of interest are the dissemination time and the message complexity. Namely how many rounds and messages are needed until all vertices are informed\footnote{a call, in which no data is sent (e.g., the rumor, or a pointer), is not considered as a message}  (on average or with high probability), even under node failures.
The \emph{bit complexity} is also a metric of interest and counts the \emph{total} number of bits sent during the dissemination time. This quantity is a bit more involved since it depends also on $b$ (the size of the rumor) and messages at different phases  of the algorithms may have different sizes. 

%In this paper we allow to send additional information to the token itself,  but this addition is limited in size to store only a constant number of nodes' addresses. In particular, messages of more than a $O(\log n)$ bits are not allowed (e.g., describing the complete knowledge of a node) and we assume the rumor itself is described by $O(\log n)$ bits (although this is not significant for the message complexity). 

A \emph{pointer jumping} is a classical operation from parallel algorithm design~\cite{leighton1992introduction} where the destination of your next round pointer is the pointer at which your current pointer points to. Our algorithm uses pointer jumping by sending the addresses (i.e., pointers) of previous communication partners to current partners (see Section \ref{analysis} for a detailed description).

%- add rumor size and bit complexity

\section{Related Work}
Beside the basic random phone call model, gossip algorithms and rumor spreading were generalized in several different ways.
The basic extension was to study \emph{uniform gossip} (i.e., the called partner is selected uniformly at random from the neighbors lists) on graphs other than the clique.
Feige et. al.~\cite{feige90randomized} studied randomized broadcast in networks and extended the result of $O(\log n)$ rounds for \push to different types of graphs like hypercubes and random graphs models. 
Following the work of Karp et al.~\cite{karp2000randomized}, and in particular in recent years the \ex protocol was studied intensively, both to give tight bounds for general graphs and to understand its performance advantages on specific families of graphs. 
A lower bound of $\Omega(\log n)$ for uniform gossip on the clique can be conclude from \cite{sauerwald2010mixing} that studies the sequential case. We are not aware of a lower bound for general, address-oblivious \ex.

Recently Giakkoupis~\cite{giakkoupis2011tight} proved an upper bound for general graphs as a function of the \emph{conductance}, $\phi$, of the graph, which is $O(\phi^{-1} \log n)$ rounds. Since the conductance is at most a constant this bound cannot lead to a value of $o(\log n)$, but is tight for many graphs. Doerr et al.~\cite{doerr2011social} studied information spreading on a known model of social networks and showed for the first time
an upper bound which is $o(\log n)$ for a family of natural graphs. They proved that while uniform gossip with \ex results in $\Theta(\log n)$ rounds in preferential attachment graphs, a slightly improved version where nodes are not allowed to repeat their last call results in a spreading time of $O(\frac{\log n}{\log\log n})$. A similar idea was previously used in \cite{ES08,BET08} to reduce the message complexity of \ex in random graphs.
Fountoulakis et al.~\cite{fountoulakis2012ultra} considered spreading  arumor to all but a small $\epsilon$-fraction of the population. For random power law graphs \cite{CL02} they proved that \ex informs all but an $\epsilon$-fraction of the nodes in  $O(\log \log n)$ rounds. Their proof relies on the existence of many 
\emph{connectors} (i.e., nodes with low degree connected to high degree nodes) 
which amplify the spread of the rumor between high degree nodes, and this influenced our approach; in some sense our algorithm tries to imitate the structure of the social network they studied.

%{\bf Chen: Discuss the lower bound} Issc 2006 robert run time and robustness TCS 2008

Another line of research was to study \ex (as well as \push and \pull separately) but not under the uniform gossip model.
Censor-Hillel et al.~\cite{censor2012global}, gave an algorithm for all-to-all dissemination in arbitrary graphs which eliminates the dependency on the conductance.
For unlimited message sizes (essentially you can send everything you know), their randomized algorithm informs all nodes in $O(D + \mathrm{polylog}(n))$ rounds where $D$ is the graph diameter;
clearly this is tight for many graphs. 
Quasirandom rumor spreading was first offered by Doerr et al. in~\cite{doerr2008quasirandom,DFS09} and showed to outperform the randomize algorithms in some cases (see also \cite{BES15} for a study of the message complexity of quasirandom rumor spreading). Most recently Haeupler~\cite{Haeupler2013Simple} proposed a completely deterministic algorithm that spread a rumor with $2(D+\log n)\log n$ rounds (but also requires unlimited message size). 

In a somewhat different model (but similar to ours), where nodes can contact any address as soon as they learn about it, Harchol-Balter et. al.~\cite{harchol1999resource} considered the problem of resource discovery (i.e., learning about all nodes in the graph) starting from an arbitrary graph. 
They used a form of one hop pointer jumping with \ex and gave an upper bound of $O(\log^2 n)$rounds for their algorithm.
Kutten at. el. \cite{kutten2003deterministic,kutten2007asynchronous} studied resource discovery both in the deterministic and the asynchronous cases and presented improve bound.

The idea of first building a virtual structure (i.e.; topology control) and then do gossip on top of this structure is not novel and similar idea was presented by Melamed and Keidar  \cite{melamed2004araneola}.
% It will be interesting to extend our result to this case
%(starting from an arbitrary graph and not the complete graph).
Another source of influence to our work was the work on pointer jumping with \ex in the context of efficient construction of peer-to-peer networks~\cite{mahlmann2006distributed}
and on computing minimum spanning tress \cite{lotker2005minimum}.

\section{Jumping-Push-Pull in $O(\sqrt{\log n})$-time}
\label{analysis}
First, we present the algorithm, which disseminates a rumor by 
\ex in $O(\sqrt{\log n})$ time, w.h.p. Then, 
we analyze our algorithm, show its corectness, and 
prove the runtime bound.

\subsection{Algorithm - Rumor Spreading with Pointer Jumping}
\label{algo}
First, we provide a high-level overview of our algorithm.
At the beginning, a message resides on one of the nodes, and 
the goal is to distribute this message (or rumor) to every node 
in the network.
We assume that each node has a unique address (which can e.g.~be
its IP-address), and every node can select a vertex uniformly
at random from the set of all nodes (i.e., like in the random phone call model). 
Additionally, a node  can store a constant number of addresses, out of which 
it can call one of them in a future round. However, a node must decide 
in each round whether it chooses an address uniformly at random 
or from the pool of the addresses stored before the current 
round.  

%Our algorithm works even if the nodes only have an estimate of the number of nodes, 
%which is correct up to some constant factor. 
In our analysis, we assume for simplicity 
that every node knows $n$ exactly. However, a slightly modified version of our algorithm also 
works if the nodes have an estimate of $\log n$, which is correct up to some 
constant factor. We discuss this case in Section \ref{disc}.

The algorithm consists of five main \emph{phases} and these phases may
contain several \emph{rounds} of communication.
Basically there are two type of nodes in the algorithm, which we call  
\emph{leaders} and \emph{connectors}, and the algorithm is:

\begin{itemize}
\item[] \textbf{Phase 0} - each informed node performs \push in every step of this phase. 
The phase consists of $c \log \log n$ steps, where $c$ is some suitable constant.
According to e.g.~\cite{karp2000randomized}, the message is contained in 
$\log^2 n$ many nodes at the end of this phase.
\item[] \textbf{Phase 1} - each node flips a coin to decide whether 
it will be a leader, with probability $1/2^{\sqrt{\log n}}$, 
or a connector, with probability $1-1/2^{\sqrt{\log n}}$.
\item[] \textbf{Phase 2} - each connector chooses leaders by preforming five 
pointer jumping sub-phases, each for $c \sqrt{\log n}$ rounds. At the end, 
all but $o(n)$ connectors will have at least 2 leader addresses stored 
with high probability.
%\footnote{With high probability or \whp means with  probability $1-n^{-1-\Omega(1)}$.}. 
Every such connector keeps exactly $2$ leader addresses 
(chosen uniformly at random) and forgets all the others. A detailed description of this 
phase is given below.
\item[] \textbf{Phase 3} - each connector opens in each round of this phase 
a communication channel to a randomly chosen 
node from the list of leaders received 
in the previous phase. However, once a connector receives the 
message, it only transmits once in the next round using \push communication to its other leader. 
The leaders send the message 
in each round over all incoming channels during the whole phase (i.e., the leaders 
send the message by 
\pull). The length of this phase is $c \sqrt{\log n}$ rounds.
%two consecutive steps, once they get the message.
%This phase consists of $c \sqrt{\log n}$ rounds in total.
\item[] \textbf{Phase 4} - every node performs the usual 
\ex (median counter algorithm according to \cite{karp2000randomized}) for 
$c \sqrt{\log n}$ rounds. All informed nodes are considered to be 
in state $B_1$ at the beginning of this phase (cf.~\cite{karp2000randomized}). 
%(a description of this algorithm is 
%given in the proof of Theorem \ref{main_theo}).
\end{itemize}

%At the beginning  of the first phase, each node flips a coin to decide whether 
%it will be a leader, with probability $1/2^{\sqrt{\log n}}$, 
%or a connector, with probability $1-1/2^{\sqrt{\log n}}$.
%{\bf Chen: what about the assumption on the network size?}

The second phase needs some clarification: 
it consists of $5$ sub-phases in which connectors chose leaders. 
In each sub-phase,  every connector performs so called
pointer-jumping~\cite{leighton1992introduction} for $c \sqrt{\log n}$ rounds, where $c$ 
is some large constant. 
The leaders do not participate in pointer jumping, and 
when contacted by a connector, they let it know that it has reached a leader.
The pointer jumping sub-phase works as follow: in the first round 
every connector chooses a node uniformly at random,
and opens a communication channel to it. Then, each (connector or leader) node, 
which has incoming communication channels, sends its 
address by \pull to the nodes at the other end of these channels.
In each round $i>1$ of this sub-phase, every connector 
calls on the address obtained in step $i-1$, and opens a 
channel to it. Every node, which is incident to 
an incoming channel, transmits the address obtained in 
step $i-1$. Clearly, at some time $t$ each node stores only the address received in the 
previous step $t-1$ of the current sub-phase, and the addresses stored 
at the end of the previous sub-phases.
%There is, however, an exception to this rule:  
%if a node receives in some step $i$ its own address, then 
%it stops opening channels for the rest of this sub-phase, 
%and transmits its own address (together with a ``black'' 
%message) over every incoming channel. 
If in some sub-phase a connector $v$ does not receive a leader address at all, 
then it forgets the address stored in the last step of
this sub-phase. In this case we say that $v$ is ``black'' in this sub-phase.
The idea of using connectors to amplify the information propagation 
in graphs has already been used in e.g.~\cite{fountoulakis2012ultra}. 
%However, in their case
%the edges of the connectors were independently distributed, while here the edge distributions
%between connectors and leaders 
%whether two connectors have the same leader addresses at the end of 
%some subphases 
%are not independent anymore. 
%The ``black'' message is also removed. 

%In the third phase, in each round every connector chooses 
%a node uniformly at random from the set of (up to 5) addresses that stored 
%in the previous phase and open a communication channel to it (i.e. as we 
%show these will be addresses of leaders \whp).
%Every node then performs \ex over each outgoing/incoming
%communication channel, and spreads the piece of information 
%which is supposed to be distributed in the network. This 
%sub-phase consists in total of $c \sqrt{\log n}$ rounds.

%In the fourth phase every node performs the usual 
%\pull algorithm (according to \cite{karp2000randomized}) for 
%$c \log \log n$ steps.

From the description of the algorithm it follows that its 
running time is $O(\sqrt{\log n})$. In the next section we 
show that every node becomes informed with probability 
$1-n^{-1-\Omega(1)}$.

%{\bf Chen: made a pass up here}
 
\subsection{Analysis of the Algorithm}
For our analysis we assume the following failure model. Each node may fail
(before or during the execution of the algorithm) with some probability 
$O(1/2^{\sqrt{\log n}})$. This implies that  
e.g.~$n^{1-\epsilon}$ nodes may fail in total, where $\epsilon >0$ can be any small 
constant. If a node fails, then it does not participate 
in any pointer- or message-forwarding process. Moreover, we assume that the 
other nodes do not realize that a node has failed, even if they contact him 
directly. That is, all nodes which contact (directly or by 
pointer-jumping) a failed node in some sub-phase are 
also considered to be failed.

First, we give a high-level overview of our proofs. Basically, we do not consider phases $0$ 
and $1$ in the analysis; the resulting properties on the set of informed nodes are 
straight-forward, and have already been discussed in e.g.~\cite{karp2000randomized}. 
Thus, we know that at the end of phase $0$, the rumor is contained in at least 
$\log^2 n$ nodes, and at the end of phase $1$ there are $n/2^{\sqrt{\log n}} \cdot 
(1 \pm o(1))$ leaders, \whp Lemma \ref{cycle} analyzes phase $2$. We show that most of the 
connectors will point to a leader after a sub-phase, \whp To show this, we bound the probability
that for a node $v$, the choices of the nodes in the first step of this sub-phase lead 
to a cycle of connectors, such that after performing pointer jumping for $c \sqrt{\log n}$ 
steps, $v$ will point to a node in this cycle. Since we have in total $5$ sub-phases, which are 
run independetly, we conclude that each connector 
will point to a leader, after at least $2$ sub-phases. At this point we do not consider node failures.

In Lemma \ref{upper-bound}, we basically bound the number of nodes pointing to the same 
leader. For this, we consider the layers of nodes, which are at distance $1$, $2$, etc... from 
an arbitrary but fixed leader $u$ after the first step of a sub-phase. Since we know how many 
layers we have in total, and bound the growth of a layer $i$ compared to the previous layer $i-1$
by standard balls into bins techniques, we obtain an upper bound, which is polynomial 
in $2^{\sqrt{\log n}}$. 

In Lemma \ref{no-followers} we show that most of the connectors share a leader 
address at the end of a sub-phase with 
$\Omega(2^{\sqrt{\log n}}/\log n)$ many connectors, \whp Here, we start to consider node failures too.
To show this, we compute the expected 
length of the path from a connector to a leader after the first step of a sub-phase. However, 
since these distances are not independent, we apply Martingale techniques to show that 
for most nodes these distances occur with high probability.

Lemma \ref{lem-growth} analyzes then the growth in the number of informed nodes
within two steps of phase $3$. What we basically show is that after any two steps, 
the number of informed nodes is increased by a factor of $2^{\sqrt{\log n}/2}$, \whp, 
and most of the newly informed nodes are connected to a (second) leader, which is not 
informed yet. Thus, most connectors which point to these leaders are also not informed.
These will become informed two steps later.

The main theorem then uses the fact that at the end of phase $3$ a $2^{7 \sqrt{\log n}}$
fraction of the nodes is informed, \whp Then, we can apply the algorithm of 
\cite{karp2000randomized} to inform all nodes within additional $O(\sqrt{\log n})$ steps, \whp

Now we start with the details.
In the first lemma we do not consider node failures.
For this case, we show that, \whp,  there is no connector which 
is ``black'' in more than two sub-phases of the second phase.
Let $r(v)$ be the choice of an arbitrary but fixed connector node $v$
in the first round of a sub-phase. Furthermore, let 
$R(v)$ be the set of nodes which can be reached 
by node $v$ using (directed) edges of the form $(u,r(u))$
only. That is, a node $u$ is in $R(v)$ iff 
there exist some nodes $u_1, \dots , u_k$ 
such that $u_1 = r(v)$, $u_{i+1} = r(u_i)$ for any 
$i \in \{ 1, \dots , k-1\}$, and $u = r(u_k)$. 

Clearly, if there are no node failures, then 
only one of the following cases may occur: either a leader $u$ exists 
with $u \in R(v)$, or $R(v)$ has a cycle. We prove the following lemma.
\begin{lemma}
\label{cycle}
For an arbitrary but fixed connector $v$, the set $R(v)$ has a
cycle with probability $O\left(\frac{2^{2\sqrt{\log n}} \log^2 n}{n}\right)$. 
Furthermore, the size of $R(v)$ is  $\abs{R(v)} = O(2^{\sqrt{\log n}} \log n)$, \whp, and 
$\abs{R(v)} = O(2^{\sqrt{\log n}})$, with constant probability.
\end{lemma}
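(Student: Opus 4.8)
The plan is to analyze the deterministic walk $u_0 = v$, $u_{i+1} = r(u_i)$ that traces out $R(v)$, exploiting the fact that $R(v)$ is exactly the set of distinct nodes this walk visits before it either reaches a leader (a node with no outgoing pointer in the process) or revisits an already-seen node, closing a cycle. Write $p = 2^{-\sqrt{\log n}}$ for the probability that a given node is a leader. The key structural observation is that the leader/connector labels drawn in Phase 1 are independent of the uniform pointer choices $r(\cdot)$ drawn in Phase 2, and that by the principle of deferred decisions I may expose $r(u_i)$ only at the moment the walk first reaches $u_i$. Hence, as long as no cycle has yet closed, each newly exposed pointer $r(u_i)$ is uniform over all $n$ nodes and independent of the history, and each freshly visited node is independently a leader with probability $p$.

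First I would bound $\abs{R(v)}$. Let $T$ denote the number of steps until the walk first reaches a leader, ignoring cycles; since each freshly visited node is a leader independently with probability $p$, $T$ is stochastically dominated by a geometric random variable with mean $1/p = 2^{\sqrt{\log n}}$, so $\Pr[T > k] \le (1-p)^k \le e^{-pk}$. Taking $k = L := (c'/p)\log n = c'\, 2^{\sqrt{\log n}} \log n$ gives $\Pr[T > L] \le n^{-c'}$, so \whp the walk meets a leader (or has already closed a cycle, which only makes $R(v)$ smaller) within $L$ steps; this yields $\abs{R(v)} = O(2^{\sqrt{\log n}} \log n)$, \whp. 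The same tail bound with $k = C \cdot 2^{\sqrt{\log n}} = C/p$ gives $\Pr[T > C/p] \le e^{-C}$, which is a small constant for $C$ large, so $\abs{R(v)} = O(2^{\sqrt{\log n}})$ holds with constant probability.

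Next, for the cycle probability I would split on the length bound just established. With $L = O(2^{\sqrt{\log n}} \log n)$ as above, $\Pr[\text{cycle}] \le \Pr[\text{cycle closes within } L \text{ steps}] + \Pr[T > L]$. While no cycle has closed the visited nodes $u_0, \dots, u_{k-1}$ are distinct, so the freshly exposed pointer $r(u_{k-1})$ lands on one of these $k$ nodes with probability exactly $k/n$; a union bound over the first $L$ steps gives
\begin{equation}
\Pr[\text{cycle closes within } L \text{ steps}] \;\le\; \sum_{k=1}^{L} \frac{k}{n} \;\le\; \frac{L^2}{n} \;=\; O\!\left(\frac{2^{2\sqrt{\log n}} \log^2 n}{n}\right).
\end{equation}
Choosing $c' > 1$ makes the residual term $\Pr[T > L] \le n^{-c'}$ negligible compared with this, since $2^{2\sqrt{\log n}} = n^{o(1)}$, so the total cycle probability is $O(2^{2\sqrt{\log n}} \log^2 n / n)$, as claimed. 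Note that it is precisely the \whp\ length $L \approx 2^{\sqrt{\log n}}\log n$, squared, that produces the $\log^2 n$ factor in the statement.

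The main obstacle is making the independence rigorous: I must justify that exposing pointers lazily keeps each new $r(u_i)$ uniform and independent of the past, and that a visited node being revealed as a connector (never a leader, or the walk would have stopped) does not bias the uniform target distribution used in the $k/n$ cycle estimate. Care is also needed in the failure-free regime assumed here, where the only two terminating events are reaching a leader and closing a cycle, so that $R(v)$ is exactly the rho-shaped trajectory of the walk; this is what lets me identify $\abs{R(v)}$ with the stopping time $T$ up to a cycle's worth of extra nodes, which is itself $O(L)$.
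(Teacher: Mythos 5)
Your proposal is correct and follows essentially the same route as the paper: the paper likewise bounds the survival probability of the pointer-chase path (its events $\overline{A_k}\wedge B_k$ correspond to your lazily exposed steps), derives both the \whp and constant-probability size bounds from the geometric decay $\left(1-\frac{1}{2^{\sqrt{\log n}}}\right)^k$, and bounds the cycle probability by summing $k/(n-1)$ over the first $O(2^{\sqrt{\log n}}\log n)$ steps plus a negligible tail for longer paths. The only cosmetic difference is that the paper conditions on the Chernoff-concentrated leader count $\abs{L}$ rather than deferring the Phase-1 coin flips as you do.
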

\begin{proof}
Let $P(v)$ be a directed path $(v,u_1, \dots , u_k)$, 
where $u_1 = r(v)$, $u_{i+1} = r(u_i)$ for any 
$i \in \{ 1, \dots , k-1\}$, and $u_i \neq u_j,v$ for 
any $i,j \in \{ 1, \dots , k\}$, $i \neq j$. Then, 
$r(u_k) \in \{ v,u_1, \dots , u_{k-1} \}$ with 
probability $k/(n-1)$. Let this event be denoted by 
$A_k$. Furthermore, let $B_k$ be the event that 
$r(u_k)$ is not a leader ($B_0$ is the event that neither $r(v)$ is not a leader).
If $L$ is the set of leaders, then since communication partners are selected independently we have
\begin{eqnarray*}
&& Pr[\overline{A_k} \wedge B_k ~|~ \overline{A_1} \wedge B_1 
\dots \overline{A_{k-1}} \wedge B_{k-1}] = \frac{n-|L|-k}{n-1} \\
&&\mbox{ and } \\
&& Pr[\overline{A_1} \wedge B_1] = \frac{n-|L|}{n-1} \cdot \frac{n-|L|-1}{n-1}.
\end{eqnarray*}
Simple application of Chernoff bounds imply that $\abs{L} = n (1 \pm o(1))/2^{\sqrt{\log n}}$, \whp 
We condition on the event that this bound holds on 
$\abs{L}$, and obtain for some $k > c \cdot 2^{\sqrt{\log n}} \log n$ that 

\begin{eqnarray}
&& Pr[\overline{A_1} \wedge B_1] \cdot 
Pr[\overline{A_2} \wedge B_2 ~|~ \overline{A_1} \wedge B_1]
\cdot \dots \cdot \notag\\
&& \cdot 
Pr[\overline{A_k} \wedge B_k ~|~ \overline{A_1} \wedge B_1 
\wedge \dots \wedge 
\overline{A_{k-1}} \wedge B_{k-1}] \notag\\
&\leq& 
\left( 1- \frac{1}{2^{\sqrt{\log n}}}\right)^{c \cdot  
2^{\sqrt{\log n}} \log n} \leq n^{-3-\Omega(1)} ,
\label{eq-lengthR(v)}
\end{eqnarray}
whenever $c$ is large enough. The first inequality 
follows from $\abs{L} = \omega(k)$. This implies that 
the size of $R(v)$ is at most $c \cdot 
2^{\sqrt{\log n}} \log n$, \whp
Applying Inequality (\ref{eq-lengthR(v)}) with 
$k = c \cdot 2^{\sqrt{\log n}}$, we obtain that 
the size of  $R(v)$ is at most $c \cdot 
2^{\sqrt{\log n}}$, with some constant probability tending to 1 as 
$c$ tends to $\infty$. 

Now we prove that 
$$
Pr[R(v) \mbox{ contains a cycle}] =O\left(\frac{2^{2\sqrt{\log n}} \log^2 n}{n} \right) .
$$

%We condition first on the event that $|R(v)| = 
%O( 2^{\sqrt{\log n}} \log n)$ (which holds \whp).
We know that 
$$
Pr[A_i~|~\overline{A_0} \wedge B_0 
\wedge \dots \wedge 
\overline{A_{i-1}} \wedge B_{i-1} ]
= \frac{i}{n-1} ,
$$ 
where $B_0$ is the event that $r(v) \not\in L$ and $A_0 = \emptyset$.
Then, $\abs{R(v)}$ has a cycle, with probability less than
\begin{eqnarray*}
&& \sum_{i=1}^{n-\abs{L}-1} Pr[A_i~|~\overline{A_0} \wedge B_0 
\wedge \dots \wedge 
\overline{A_{i-1}} \wedge B_{i-1} ] \\
&& \cdot Pr[\overline{A_0} \wedge B_0 
\wedge \dots \wedge
\overline{A_{i-1}} \wedge B_{i-1} ]\\
&\leq& \sum_{i=1}^{c 2^{\sqrt{\log n}} \log n} 
Pr[A_i~|~\overline{A_0} \wedge B_0 
\wedge \dots \wedge
\overline{A_{i-1}} \wedge B_{i-1} ] + \\
&&
\sum_{i=c 2^{\sqrt{\log n}} \log n+1}^{n-\abs{L}-1} 
Pr[\overline{A_0} \wedge B_0 
\wedge \dots \wedge
\overline{A_{i-1}} \wedge B_{i-1} ] \\
%O(n^{-1 - \Omega(1)}) \\ % Chen: O(n^{-1 - \Omega(1)}) not clear
&& \leq \frac{(c 2^{\sqrt{\log n}} \log n)^2}{n} + O(n^{-2 - \Omega(1)}). 
\end{eqnarray*}
%In the inequlities above, $B_0$ is the event that $r(v) \not\in L$ and $A_0 = \emptyset$.
As already shown, if $i > c 2^{\sqrt{\log n}} \log n$, then 
$Pr[\overline{A_1} \wedge B_1 
\wedge \dots \wedge 
\overline{A_{i-1}} \wedge B_{i-1} ] =
O(n^{-2 - \Omega(1)})$ if $c$ is large 
enough.
\qed
\end{proof}

From the previous lemma we obtain the following corollary.
\begin{corollary}
\label{cor_cycle}
Assume there are no node failures.
After phase 2, every connector stores the address of at least $2$ leaders, with 
probability at least $1-n^{-2}$.
\end{corollary}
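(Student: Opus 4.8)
The plan is to reduce the event ``$v$ ends phase~2 with fewer than two leader addresses'' to the cycle event already controlled in Lemma~\ref{cycle}, and then to exploit the independence of the five sub-phases. Fix a connector $v$ and a single sub-phase. Since there are no failures, every non-leader has a successor $r(\cdot)$, so the chain $v, r(v), r(r(v)), \dots$ is either eventually periodic (i.e.\ $R(v)$ contains a cycle) or it is a simple directed path that must terminate at a leader. In the latter case its length is $\abs{R(v)} = O(2^{\sqrt{\log n}}\log n)$, \whp, by Lemma~\ref{cycle}. Because $t$ rounds of pointer jumping move a connector's pointer to the node $2^{t}$ steps ahead along its chain, after the $c\sqrt{\log n}$ rounds of a sub-phase the pointer has advanced $2^{c\sqrt{\log n}}$ steps; for $c$ large enough this exceeds the length bound, since $2^{\sqrt{\log n}}\log n \ll 2^{c\sqrt{\log n}}$, so $v$ reaches the terminal leader and is \emph{not} black. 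Thus $v$ is black in a given sub-phase only if $R(v)$ has a cycle or the rarer event $\abs{R(v)}>2^{c\sqrt{\log n}}$ occurs; writing $q$ for the probability that $v$ is black in one sub-phase, Lemma~\ref{cycle} gives
\[
q \;=\; O\!\left(\frac{2^{2\sqrt{\log n}}\log^{2} n}{n}\right) \;=\; n^{-1+o(1)} .
\]

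Next I would combine the sub-phases. The five sub-phases use independent random first choices, so the five indicators ``$v$ is black'' are independent, each of probability at most $q$. A connector is non-black in at least two sub-phases, and hence reaches at least two leaders, unless it is black in at least four. Therefore
\[
\Pr[\,v \text{ is non-black in} <2 \text{ sub-phases}\,]
\;\le\; \binom{5}{4} q^{4} + q^{5} \;=\; O(q^{4}) \;=\; n^{-4+o(1)} \;\le\; n^{-2}
\]
for $n$ large, which already beats the claimed bound (and is strong enough that a union bound over all connectors keeps the statement \whp).

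One point deserves care: the two stored addresses must be two \emph{distinct} leaders, whereas the count above only guarantees two successful sub-phases. In each successful sub-phase the leader reached is essentially uniform over the $\abs{L}=n/2^{\sqrt{\log n}}\,(1\pm o(1))$ leaders, so two successful sub-phases collide on the same leader with probability $\sum_{\ell} q_{\ell}^{2}\approx 1/\abs{L}=n^{-1+o(1)}$. I would resolve this by summing over the common leader: a configuration with $k$ black sub-phases and the remaining $5-k$ all forced onto one leader contributes
\[
\binom{5}{k} q^{k}\sum_{\ell} q_{\ell}^{\,5-k}\;\approx\; q^{k}\,\abs{L}^{-(4-k)}\;=\;n^{-4+o(1)},
\]
so $\Pr[\,v \text{ reaches} \le 1 \text{ distinct leader}\,]$ is still $n^{-4+o(1)}=o(n^{-2})$; the deficiency from each black sub-phase and from each forced collision each cost a factor $n^{-1+o(1)}$, and at least four such factors are always needed.

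The main obstacle is the first paragraph: one must argue cleanly that ``no cycle'' truly forces the pointer onto a leader \emph{within} the allotted $c\sqrt{\log n}$ rounds, which relies on coupling the doubling behaviour of pointer jumping with the w.h.p.\ length bound of Lemma~\ref{cycle}, and on folding the low-probability event $\abs{R(v)}>2^{c\sqrt{\log n}}$ into $q$ (it is dominated by the cycle probability $n^{-1+o(1)}$). The distinctness bookkeeping is the secondary subtlety, but the cancellation above shows it does not change the order of the final bound.
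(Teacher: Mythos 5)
Your argument is correct and is essentially the one the paper intends: the paper states this corollary with no explicit proof beyond ``from the previous lemma,'' and the intended reasoning is exactly your reduction of ``black'' to the cycle event of Lemma~\ref{cycle} (no cycle and no failure forces the doubling pointer jump to land on a leader within $c\sqrt{\log n}$ rounds), followed by independence of the five sub-phases, the $O(q^4)$ binomial tail for being black in at least four of them, and a union bound over connectors. Your extra bookkeeping showing the two successful sub-phases yield two \emph{distinct} leaders is a detail the paper silently omits, but it does not change the route.
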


We can also show the following upper bound on the number of connectors sharing the 
same leader address. This bound also holds in the case of node failures, since 
failed nodes can only decrease the number of connectors sharing the same leader address.
\begin{lemma}
\label{upper-bound}
Each connector shares the same leader address with $O(2^{3.1 \sqrt{\log n}} )$ other connectors, 
\whp
\end{lemma}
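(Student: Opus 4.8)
The plan is to reinterpret the statement combinatorially and then run a layered (reverse‑BFS) growth argument. Set $\beta := 2^{\sqrt{\log n}}$. Fix one leader $u$ and one sub‑phase. Since the pointer‑jumping runs for $c\sqrt{\log n}$ rounds and doubles its reach each round, a connector's pointer reaches the \emph{first} leader on its forward chain $v \to r(v) \to r(r(v)) \to \cdots$ whenever that leader lies within $2^{c\sqrt{\log n}}$ steps; by Lemma~\ref{cycle} every chain has length $O(\beta\log n) \ll 2^{c\sqrt{\log n}}$, so this always happens. Hence the connectors that end up holding the address $u$ are exactly the set $T_u$ of connectors whose chain first meets a leader at $u$, and the quantity to bound is $\abs{T_u}$, simultaneously over all leaders $u$ and all five sub‑phases. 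Note that node failures only remove connectors from some $T_u$ (a chain through a failed node is killed), so the upper bound in the no‑failure case transfers verbatim; this is what lets the lemma assert robustness.

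Next I would explore $T_u$ by reverse layers $L_0 = \{u\}$, and $L_{i+1} = \{\text{connectors } v \notin \bigcup_{j\le i}L_j : r(v) \in L_i\}$. Conditioned on the history $\mathcal F_i$, each still‑unseen connector points into $L_i$ independently with probability $\abs{L_i}/(n-1)$, so $\abs{L_{i+1}} \preceq \mathrm{Bin}\!\big(n_c,\ \abs{L_i}/(n-1)\big)$ with conditional mean $\mu\abs{L_i}$, where $\mu = n_c/(n-1)$ and $n_c$ is the number of connectors. The crucial point is \emph{subcriticality with a quantified margin}: the Chernoff estimate $\abs{L} = (1\pm o(1))\,n/\beta$ from Lemma~\ref{cycle} gives $n_c = n(1 - 1/\beta)(1\pm o(1/\beta))$, hence $\mu \le 1 - 1/(2\beta)$. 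I would also record that the number of nonempty layers is at most $\max_v \abs{R(v)} = O(\beta\log n)$ \whp, by a union bound over the $n$ connectors using Inequality~(\ref{eq-lengthR(v)}).

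The core step is a per‑layer invariant. Put $\tau := C\beta^2\log n$ for a suitable constant $C$, and show that \whp, for every $i$, $\abs{L_{i+1}} \le \max\{2\tau,\ \abs{L_i}\}$. When $\abs{L_i}\le 2\tau$ the conditional mean $\mu\abs{L_i}$ lies \emph{below} $2\tau$ by at least $2\tau(1-\mu) \ge \tau/\beta$, so exceeding $2\tau$ requires an upward deviation of $\ge \tau/\beta$; a Bernstein/Chernoff bound gives failure probability $\exp(-\Theta(\tau/\beta^2)) = \exp(-\Theta(C\log n)) = n^{-\Omega(1)}$, with the exponent tunable through $C$. When $\abs{L_i} > 2\tau$ the same margin $1-\mu\ge 1/(2\beta)$ yields $\abs{L_{i+1}} < \abs{L_i}$ with the same kind of bound. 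A union bound over the $O(\beta\log n)$ layers, the at most $n$ leaders, and the five sub‑phases keeps the total failure probability at $n^{-1-\Omega(1)}$. Starting from $\abs{L_0}=1$, induction then forces $\abs{L_i}\le 2\tau$ for all $i$.

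Finally I would assemble the bound: $\abs{T_u} = \sum_i \abs{L_i} \le (\text{depth}+1)\cdot 2\tau = O(\beta\log n)\cdot O(\beta^2\log n) = O(\beta^3\log^2 n)$, and since $\log^2 n = o(2^{0.1\sqrt{\log n}})$ this is $O(2^{3.1\sqrt{\log n}})$, as claimed. I expect the main obstacle to be making the subcriticality margin $1-\mu \ge 1/(2\beta)$ rigorous and central: if the offspring mean were merely $\approx 1$ (critical), the layers would not contract and the argument would collapse, so the fluctuation bounds on $\abs{L}$ must be sharp enough to guarantee $\mu \le 1 - 1/(2\beta)$. The compensating good news is that the exponent $3.1 > 3$ leaves a factor $2^{0.1\sqrt{\log n}}$ of slack that swallows every polylogarithmic and constant factor, so the crude ``depth times maximum layer size'' estimate suffices and no finer total‑progeny tail analysis is needed.
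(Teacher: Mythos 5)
Your proposal is correct and reaches the stated bound, and it shares the paper's overall skeleton: both arguments explore the set of connectors whose pointer chain terminates at a fixed leader $u$ via reverse layers $L_i(u)$, bound each layer, and multiply by the $O(2^{\sqrt{\log n}}\log n)$ depth coming from Lemma~\ref{cycle}. Where you genuinely diverge is in the mechanism controlling layer sizes. The paper treats the layer-to-layer map as essentially \emph{critical}: by a balls-into-bins concentration bound (\cite{raab1998balls}) it gets $|L_{i+1}(u)| \le |L_i(u)| + O(\log n + \sqrt{|L_i(u)|\log n})$, solves the recursion to $|L_i(u)| \le c\,i^2\log n$, and sums the quadratically growing layers over the depth. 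You instead extract a quantified \emph{subcriticality} margin $\mu \le 1 - 1/(2\beta)$ from the density of leaders and prove a uniform cap $|L_i| \le 2\tau = O(\beta^2\log n)$ on every layer via a Chernoff deviation of size $\tau/\beta$. Both routes land at $O(\beta^3\,\mathrm{polylog}\,n) = O(2^{3.1\sqrt{\log n}})$; yours arguably isolates the ``right'' structural reason the trees stay small (the reverse process is a subcritical branching process because a $1/\beta$ fraction of pointers is absorbed by leaders), whereas the paper's fluctuation recursion works even without exploiting that drift. Two small cautions: your stochastic domination $|L_{i+1}| \preceq \mathrm{Bin}(n_c, |L_i|/(n-1))$ is not literally exact, since conditioning a connector's pointer out of the previously revealed layers inflates the success probability by a factor $1/(1 - O(\beta^3\log^2 n / n)) = 1 + o(1/\beta)$ — this is harmlessly absorbed by your margin but should be stated; and when you union-bound over leaders you should note that for each fixed $u$ you may explore $u$'s tree first by exchangeability, so the per-leader analysis is legitimate despite the trees sharing the connector pool. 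You correctly observe, as the paper does, that node failures only shrink the sets, so the no-failure bound transfers.
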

\begin{proof}
Let $S$ be a set of nodes, and let $r(S) = \{v \in V~|~r(v) \in S\}$. We model the parallel process 
of choosing nodes in the first round of a fixed sub-phase by the following sequential 
process (that is, the first round of the sub-phase is modeled by the whole sequence of steps 
of the sequential process). In the first step of the sequential process, 
all connectors choose a random node. We keep all edges 
between $(u,r(u))$ with $r(u) \in L$, and release all other edges. Let $L_1$ 
denote the set of nodes $u$ with $r(u) \in L$. In the $i$th step, we let each node of 
$V \setminus \cup_{j=0}^{i-1} L_j$ choose a node from the set $V \setminus \cup_{j=0}^{i-2} L_j$
uniformly at random, where $L_0 = L$. Clearly, the nodes are not allowed to choose themselves.
Then, $L_i$ is the set of nodes $u$ with $r(u) \in L_{i-1}$, and all edges 
$(u,r(u))$ (generated in this step) with $r(u) \not\in L_{i-1}$ are released.

Obviously, the sequential process produces the same edge distribution on the nodes 
of the graph as the parallel process. If now $S \subset L_{i-1}$, then the probability 
for a node $v \in V \setminus \cup_{j=0}^{i-1} L_j$ to choose a node in $S$ is 
$|S|/|V \setminus \cup_{j=0}^{i-2} L_j|$. Then, according to \cite{raab1998balls} the number 
of nodes $v$ with $r(v) \in S$ is at most 
$|S| + O(\log n + \sqrt{|S| \log n})$, \whp

Similar to the definition of $L_i$, 
for a leader $u$ the nodes $v$ with $r(v) =u$ are in set $L_1(u)$, the 
nodes $v$ with $r(r(v)) =u$ are in set $L_2(u)$, and generally, the nodes $v$ 
with $r(v) \in L_{i-1}(u)$ define the set $L_i(u)$.
%Let $u$ be a leader, and let $S_u$ be the set of connectors, which store the address of 
%$u$ after a fixed sub-phase. Furthermore, let $L_i(u) = S_u \cap L_i$. 
Then, according to the 
arguments above
$|L_{i+1}(u)| = |L_{i}(u)| + O(\log n + \sqrt{|L_{i}(u)| \log n}) ,$
\whp We assume now that $|L_1(u)| = \Theta(\log n)$ (from \cite{raab1998balls} we may conclude that 
$|L_1(u)| =O(\log n)$, \whp). Then, for any $i \leq c \cdot 2^{\sqrt{\log n}} \log n$, we assume the
highest growth for $|L_{i+1}(u)|$, i.e.,
$|L_{i+1}(u)| = |L_{i}(u)| + O(\sqrt{|L_i(u)| \log n}) $, where $c$ is some constant. 
This recursion yields
$|L_{i+1}(u)|$ $\leq c (i+1)^2 \log n$, if $c$ is large enough.
Then,
$|L_{c \cdot 2^{\sqrt{\log n}} \log n}(u)|$ $< c^3 2^{2 \sqrt{\log n}} \log^3 n .$
Since $|R(v)| =O(2^{\sqrt{\log n}} \log n)$ for any $v$ (cf. Lemma \ref{cycle}), 
and assuming that $|L_{i}(u)| \leq c i^2 \log n$ for each $i$, 
we obtain the claim.
%Combining now this result with Lemma \ref{cycle}, we obtain the claim.
%We know that a node 
%in a subset $V' \subset V$ chooses a node $v \in S$ with probability   
\qed
\end{proof}

Let us fix a sub-phase. We allow now node failures (i.e., each node may fail 
with some probability $O(1/(2^{\sqrt{\log n}}))$), and prove the following lemma.
\begin{lemma}
\label{no-followers}
There are $cn$ connectors, where $c>0$ is a constant, which 
store the addresses of at least two leaders, 
and each of these leader addresses is shared by at least 
$\Omega\left(\frac{2^{\sqrt{\log n}}}{\log n}\right)$ connectors, \whp
%The number of connectors, which store the address of an arbitrary but fixed leader in the 
%last step of a subphase is less than $2^{\sqrt{\log n}} \log^{\rho} n$, where $\rho$ is 
%some fixed constant.
\end{lemma}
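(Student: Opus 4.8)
The plan is to work one sub-phase at a time, control how many connectors reach a leader and how those connectors distribute over the leaders, and then combine the five independent sub-phases. Fix a sub-phase and write $p = 1/2^{\sqrt{\log n}}$ for the leader probability and $q = O(p)$ for the failure probability. For a connector $v$ consider the chain $v, r(v), r(r(v)), \dots$; by Lemma \ref{cycle} it avoids a cycle with probability $1-o(1)$, so I may treat the chain as a simple path of independent nodes, each being a leader with probability $p$ and failed with probability $q$. The first \emph{marked} node on the chain is then a leader with probability $p/(p+q) = \Omega(1)$ (this is exactly where $q = O(p)$ is used), and the expected length of the path to it is $\Theta(1/(p+q)) = \Theta(2^{\sqrt{\log n}})$. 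Hence each connector reaches a live leader with probability $\Omega(1)$, and by linearity $E[\,\#\{\text{connectors reaching a leader}\}\,] = \Omega(n)$.

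Next I bound how many connectors can sit in \emph{small} in-trees. Call a leader \emph{good} if the number of live connectors whose chain reaches it in this sub-phase is at least $c_0\,2^{\sqrt{\log n}}/\log n$, and \emph{small} otherwise. By the Chernoff bound already invoked for Lemma \ref{cycle} there are at most $2n/2^{\sqrt{\log n}}$ leaders \whp, so the connectors reaching small leaders number at most $2n/2^{\sqrt{\log n}} \cdot c_0\, 2^{\sqrt{\log n}}/\log n = O(n/\log n) = o(n)$, whence $E[\,\#\{\text{connectors reaching a small leader}\}\,] = o(n)$. Subtracting from the previous paragraph gives $E[\,\#\{\text{connectors reaching a good leader}\}\,] = \Omega(n)$, and since all connectors play symmetric roles this yields the per-connector bounds $\Pr[v \text{ reaches a good leader}] = \Omega(1)$ and $\Pr[v \text{ reaches a small leader}] = o(1)$.

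Now I combine the five sub-phases. Conditioned on the leader/failure labels (fixed once in Phase~1), the choices $r(\cdot)$ in different sub-phases are independent, so by the two per-connector bounds the event that $v$ reaches a good leader in at least two sub-phases and a small leader in none has probability $\Omega(1)$. For such a $v$ the two retained addresses are necessarily good, i.e.\ shared by $\Omega(2^{\sqrt{\log n}}/\log n)$ connectors, which is exactly the desired conclusion for $v$. Letting $N$ count these connectors, $E[N] = \Omega(n)$, and it remains to show $N \ge cn$ \whp.

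The concentration of $N$ is the only delicate point, and is where the martingale enters: the per-connector events are dependent because chains of different connectors overlap. I would reveal the atomic randomness — the labels together with the choices $r_s(\cdot)$ across sub-phases $s$ — and run an exposure martingale. Resetting one $r_s(w)$ (or one label) only re-routes the connectors whose sub-phase-$s$ chain passes through $w$, a set of size $|R(w)| = O(2^{\sqrt{\log n}}\log n)$ \whp\ by Lemma \ref{cycle}, and can flip the good/small status of at most two leaders, each carrying $O(2^{3.1\sqrt{\log n}})$ connectors by Lemma \ref{upper-bound}; both bounds are $n^{o(1)}$. Since the per-coordinate influence is $n^{o(1)} = o(\sqrt n)$ over $O(n)$ coordinates, a bounded-differences inequality gives a deviation of $o(n)$, hence $N \ge cn$ \whp. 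The main obstacle is making this rigorous, because the influence bound only holds on the high-probability event that no in-tree is atypically large; I would therefore use a typical-bounded-differences (Warnke/McDiarmid-with-exceptional-set) version, conditioning on the good events of Lemmas \ref{cycle} and \ref{upper-bound}, rather than the plain Azuma inequality.
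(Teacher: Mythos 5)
Your argument is sound and lands on the same conclusion, but it establishes the central quantitative claim by a genuinely different route. The paper shows that a fixed connector $v$'s leader address is shared by $\Omega(2^{\sqrt{\log n}}/\log n)$ others by lower-bounding the \emph{length of $v$'s own chain}: it computes $\Pr[v \notin \cup_{j\le i} L_j(L)] \ge (1-|L|/(n-i-1))^i = 1-o(1)$ for $i = O(2^{\sqrt{\log n}}/\log n)$, so that with probability $1-o(1)$ the path from $v$ to its leader already contains that many connectors, all of which inherit the same address under pointer jumping. You instead use a global double-counting argument: since there are only $O(n/2^{\sqrt{\log n}})$ leaders \whp, at most $o(n)$ connectors can sit in in-trees of size below the threshold $c_0 2^{\sqrt{\log n}}/\log n$, hence a typical connector lands at a ``good'' leader. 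Your version is arguably cleaner (it is a deterministic pigeonhole bound given $|L|$, rather than a conditional product estimate) and it buys you a sharper handle on the retained addresses, since you explicitly require ``good in at least two sub-phases, small in none,'' which the paper glosses over. Your treatment of failures as a geometric race with success probability $p/(p+q)=\Omega(1)$ is also tidier than the paper's ``$|R(v)|=O(2^{\sqrt{\log n}})$ with constant probability, so some node of $R(v)$ fails with at most constant probability.'' Both proofs then conclude with a martingale/bounded-differences step; the paper exposes one connector's reachable sets $R(v_j)$ per step and cites Azuma--Hoeffding with Lipschitz constant $O(2^{\sqrt{\log n}}\log n)$, while you expose the atomic choices $r_s(w)$ and invoke typical bounded differences to cope with the exceptional event of large in-trees --- if anything, more careful than the original.

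Two small points to fix. First, the set of connectors re-routed by resetting $r_s(w)$ is the \emph{in-tree} of $w$ (the nodes from which $w$ is forward-reachable), not $R(w)$; the correct bound comes from Lemma \ref{upper-bound}, namely $O(2^{3.1\sqrt{\log n}})$, not from Lemma \ref{cycle} --- still $n^{o(1)}$, so nothing breaks. Second, ``good leaders in at least two sub-phases'' does not immediately give two \emph{distinct} leader addresses; you should add that the collision probability across two independent sub-phases is $O(2^{3.1\sqrt{\log n}}/n)=o(1)$ by Lemma \ref{upper-bound}, so the event of two distinct good addresses still has probability $\Omega(1)$.
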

% Chen: up to here
\begin{proof}
First, we consider the case in which no node failures are allowed. Then, we extend the proof.
Now let us assume that no failures occur.
We have shown in Lemma \ref{cycle} 
that the length of a path $(v, u_1, \dots , u_k, u)$ from a node $v$ to a leader $u$
is $O(2^{\sqrt{\log n}} \log n)$, \whp,
where $u_1 = r(v)$, $u_i = r(u_{i-1})$ for any $i \in \{ 2, \dots , k\}$, and 
$u = r(u_k)$. Let $u$ be a leader, and let $L_i(u)$ be the set of connectors which have distance 
$i$ from $u$ after a certain (arbitrary but fixed) sub-phase of the second phase. 
Furthermore, let $L_i(L) = \cup_{u \in L} L_i(u)$. For our analysis, 
we model the process of choosing nodes 
in the first step of this sub-phase by a sequential process (similar to the 
proof of the previous lemma), in which first
$v$ chooses a node, then $r(v)$ chooses a node, then $r(r(v))$ chooses a node, etc... In step
$i$ of this sequential process the $i$ node $u_{i-1}$ on the path $P(v)$ chooses a node.  
For some $i =O(2^{\sqrt{\log n}}/\log n)$ we have
$$
%\begin{eqnarray*}
%&& 
Pr[v \not\in \cup_{j=1}^{i} L_j(L)~|~\overline{A_1} \wedge \dots \wedge \overline{A_{i-1}}] 
%\\
%&\geq&  
%Pr[B_1~|~\overline{A_1}] 
%\cdot Pr[B_2 ~|~ \overline{A_1} \wedge \overline{A_2} \wedge B_1] \cdot \dots 
%Pr[B_{i-1} ~|~ \overline{A_1} \wedge \dots \wedge \overline{A_{i-1}} \wedge B_1 \wedge \dots \wedge  
%B_{i-2}] \\
\geq \left( 1-\frac{\abs{L}}{n-i-1}\right)^i ,
%\end{eqnarray*}
$$
%%%%%%%REMOVED%%%%%%
%\begin{eqnarray*}
%&& Pr[v \not\in \cup_{j=1}^{i} L_j(L)~|~v \in \cup_{j=1}^{n-1} L_j(L)] \\
%&\geq&  
%Pr[B_1~|~v \in \cup_{j=1}^{n-1} L_j(L)] 
%\cdot Pr[B_2 ~|~v \in \cup_{j=1}^{n-1} L_j(L) \wedge B_1] \cdot \dots 
%Pr[B_{i-1} ~|~ v \in \cup_{j=1}^{n-1} L_j(L) \wedge B_1 \wedge \dots \wedge  
%\wedge B_{i-2}] \\
%&\geq& \left( 1-\frac{\abs{L}}{n-i-1}\right)^i ,
%\end{eqnarray*}
%%%%%%%END REMOVED%%%%%
%where the $(1+o(1))$ term describes the probability 
Since $Pr[v \in \cup_{j=1}^{n-1} L_j(L)] = 1- O(2^{2 \sqrt{\log n}} \log^2 n/n)$  (cf.~Lemma 
\ref{cycle}), we obtain that,
given $R(v) \cap L \neq \emptyset$ (note that the number of nodes satisfying this 
property is $n(1-o(1))$, \whp),
%the expected length of a path $(v, u_1, \dots , u_k, u)$ from a node $v$ 
%to a leader $u$
%is $\Omega(2^{\sqrt{\log n}} / \log n)$. Moreover, 
%the inequality above 
%implies that 
a node has a path of length  $\Omega(2^{\sqrt{\log n}} / \log n)$
to a leader with probability $1-o(1)$, and thus the expected number of such nodes is n(1-o(1)).

Now we consider node failures. A node $v$ is considered failed, if it fails (as described 
at the beginning each node fails with probability $O(1/2^{\sqrt{\log n}})$), 
or there is a node in $R(v)$,
which fails. 
Since $|R(v)| = O(2^{\sqrt{\log n}})$ with constant probability, there is a node 
of such an $R(v)$ that fails with at most some constant probability. However, these probabilities 
are not independent. Nevertheless, the expected number of nodes, which will not be 
considered failed {\bf and} have a path of length $\Omega\left(\frac{2^{\sqrt{\log n}}}{\log n}\right)$
to a leader, is $\Theta(n)$.

Now, consider the 
following Martingale sequence. Let $v_1, \dots , v_{n-\abs{L}}$ denote the connectors. 
In step $j$, we reveal the directed edges and nodes from 
node $v_j$ to all nodes in all $R(v_j)$ obtained from the different sub-phases. 
Given that $\abs{R(v_j)} = O(2^{\sqrt{\log n}} \log n)$, 
we apply the Azuma-Hoeffding inequality \cite{mitzenmacher05probability}, and 
obtain that $\Theta(n)$ nodes are connected to a leader by a path of 
length $\Omega\left(\frac{2^{\sqrt{\log n}}}{\log n}\right)$ and will not be considered failed, \whp

Summarizing, a $\Theta(n)$ fraction of the nodes 
store at the end of the first 
phase the addresses of at least two leaders, and such a connector shares each of these addresses 
with $\Omega(2^{\sqrt{\log n}} / \log n)$ other connectors, \whp
\qed
\end{proof}

Applying pointer jumping on all connectors as described in the algorithm, we obtain the 
following result.
\begin{observation}
\label{obs_pointer_jumping}
If in an arbitrary but fixed sub-phase of the second phase $R(v) \cap L \neq \emptyset$ for 
some connector $v$, then $v$ stores the address of a leader $u$ at the end of this phase,
\whp
\end{observation}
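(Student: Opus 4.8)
The plan is to read the statement as a race between the length of the chain from $v$ to its nearest leader and the exponentially growing reach of pointer jumping, and to win the race by invoking the length bound already established in Lemma~\ref{cycle}. Conditioning on $R(v)\cap L\neq\emptyset$, the directed chain $v\to r(v)\to r(r(v))\to\cdots$ must terminate at a leader rather than run into a cycle, since a connector has a single out-pointer and a cycle among connectors would trap the walk and force $R(v)\cap L=\emptyset$. Let $u$ be the first leader on this chain and let $d$ be its distance from $v$; the initial segment up to $u$ is then a simple path of connectors, so $d\le\abs{R(v)}$, and Lemma~\ref{cycle} gives $d=O(2^{\sqrt{\log n}}\log n)$, \whp

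First I would make the doubling explicit. Writing $p_i(v)$ for the address stored by $v$ after round $i$ of the sub-phase, the protocol sets $p_1(v)=r(v)$ and $p_i(v)=p_{i-1}\!\big(p_{i-1}(v)\big)$ for $i>1$, so that $p_i(v)$ is the node lying $2^{i-1}$ steps along the chain. The leader $u$ serves as an absorbing endpoint: a leader forwards no pointer but instead returns its own address and announces its status when contacted, so as soon as a stored address reaches $u$ it remains pinned to $u$ in every subsequent round. Consequently $p_i(v)$ equals the node at distance $\min(2^{i-1},d)$, and once $2^{i-1}\ge d$ the address is clamped to $u$ and is recognized as a leader within one further round.

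It then remains to compare the reach with $d$. After $c\sqrt{\log n}$ rounds the reach is $2^{c\sqrt{\log n}-1}$, whereas $d=O(2^{\sqrt{\log n}}\log n)=2^{\sqrt{\log n}+\log\log n+O(1)}$. Because $\log\log n=o(\sqrt{\log n})$, taking $c$ to be a sufficiently large constant makes $2^{c\sqrt{\log n}-1}\ge d$ for all large $n$, so $v$'s pointer reaches $u$ strictly before the sub-phase ends and $v$ stores the leader address $u$. The only randomness used is the event $\abs{R(v)}=O(2^{\sqrt{\log n}}\log n)$ of Lemma~\ref{cycle}, which fails with probability $n^{-1-\Omega(1)}$; everything after conditioning on it is deterministic, and this yields the claimed \whp guarantee.

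The step I expect to be most delicate is the absorption bookkeeping of the second paragraph: one must check that a leader genuinely behaves as a fixed point of the doubling recursion, so that $v$'s pointer converges to $u$ instead of advancing past the end of the chain to an undefined successor, and that the one-round delay between first pointing at $u$ and being told that $u$ is a leader defers detection by at most an additive constant (and never causes $v$ to be discarded as ``black''). The large multiplicative slack in the reach, $2^{(c-1)\sqrt{\log n}}$, comfortably absorbs this constant, so the difficulty is purely in stating the invariant correctly rather than in any quantitative estimate.
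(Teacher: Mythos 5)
Your proposal is correct and follows essentially the same route as the paper, which simply cites ``pointer jumping on a directed path of length $\abs{R(v)}$'' together with the bound $\abs{R(v)}=O(2^{\sqrt{\log n}}\log n)$ from Lemma~\ref{cycle}; you have merely made the doubling recursion $p_i(v)=p_{i-1}(p_{i-1}(v))$, the absorption at the leader, and the comparison $2^{c\sqrt{\log n}-1}\ge d$ explicit. The details you supply (the clamping invariant and the one-round detection delay) are exactly the ones the paper leaves implicit.
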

This observation is a simple application of the pointer jumping algorithm~ \cite{leighton1992introduction}
on a directed path of length $|R(v)|$. 
%=O(2^{\sqrt{\log n}} \log n)$. 
According to Lemma \ref{cycle}, $|R(v)|=O(2^{\sqrt{\log n}} \log n)$, \whp

Now we concentrate on the third phase. We condition on the event that each connector has stored
at least two and at most $5$ different leader addresses. Furthermore, an address stored by a 
connector is shared with at least $\Omega(2^{\sqrt{\log n}} / \log n)$ other connectors, with
high 
probability (see Lemma \ref{no-followers}). Out of these connectors, let $C$ be the set of nodes $v$ with the following 
property. The first time a leader of $v$ receives the message, $v$ will contact this leader 
in the next step, pulls the message, and in the next step 
it will push the message to the other leader. 
%If a leader $u$ receives the message at some time $t$ in the third phase, 
%then in the time step 
%$t+1$, $v$ chooses $u$, and in step $t+2$ chooses another leader which is shared 
%by at least $\Omega(2^{\sqrt{\log n}} / \log n)$ other connectors, if any. 
%Here, $d_v \leq 5$ denotes the total 
%number of leader addresses stored by this connector. 
Clearly, for a node $v$ this 
event occurs with constant probability, independently of the other nodes. Therefore, the 
total number of nodes in $C$ with at least two different leader addresses, where each of these 
addresses is shared by at least $\Omega(2^{\sqrt{\log n}} / \log n)$ other connectors, is 
$\Theta(n)$, \whp We call the set of these nodes $\tilde{C}$. Now we have the following observation.

\begin{observation}
\label{obs_distr_good}
Let $C_i$ be the set of nodes which store the same (arbitrary but fixed)
leader address after a certain subsphase, and 
assume that $\abs{C_i} = \Omega(2^{\sqrt{\log n}} / \log n)$. Then, $\abs{C_i \cap \tilde{C}} 
= \Theta(\abs{C_i})$,
\whp
\end{observation}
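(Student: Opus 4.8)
The plan is to treat $\abs{C_i \cap \tilde C}$ as a sum of independent indicator variables and finish with a Chernoff bound. First I would isolate the randomness that governs membership in $\tilde C$. Recall that $\tilde C = C \cap G$, where $C$ is the set of connectors whose phase-$3$ calling pattern is ``well timed'' (the first time one of its leaders is informed, $v$ pulls in the next step and pushes to its other leader one step later), and $G$ is the set of connectors storing at least two leader addresses, each shared by $\Omega(2^{\sqrt{\log n}}/\log n)$ other connectors. The set $G$ is fixed by the phase-$2$ pointer jumping, and by Lemma \ref{no-followers} together with the conditioning made at the start of the phase-$3$ analysis, the connectors under consideration --- in particular every $v \in C_i$ --- may be assumed to lie in $G$. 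Hence, for $v \in C_i$, we have $v \in \tilde C$ if and only if $v \in C$.

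The key fact I would establish next is that the event $v \in C$ depends only on $v$'s own phase-$3$ random choices (which of its stored leaders it calls in each round, relative to the rounds in which those leaders first receive the rumor), and that this randomness is independent across connectors and independent of the phase-$2$ structure that fixes $C_i$ and $G$. As already argued in the text, $Pr[v \in C] = p$ for some constant $p>0$. Consequently, conditioned on the phase-$2$ outcome (hence on the set $C_i$ and on $C_i \subseteq G$), the indicators $\{\mathbf{1}[v \in \tilde C]\}_{v \in C_i}$ are mutually independent Bernoulli$(p)$ variables, so $E[\abs{C_i \cap \tilde C}] = p\,\abs{C_i} = \Theta(\abs{C_i})$.

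Finally I would invoke concentration. Since $\abs{C_i} = \Omega(2^{\sqrt{\log n}}/\log n)$ and $2^{\sqrt{\log n}} = \omega(\log^2 n)$ (because $\sqrt{\log n} = \omega(\log\log n)$), we get $\abs{C_i} = \omega(\log n)$, so the expectation $p\,\abs{C_i}$ is $\omega(\log n)$ as well. A Chernoff bound then yields $\abs{C_i \cap \tilde C} = (1\pm o(1))\,p\,\abs{C_i} = \Theta(\abs{C_i})$ with failure probability $e^{-\Omega(\abs{C_i})} = n^{-\omega(1)}$, which is \whp; the trivial bound $\abs{C_i \cap \tilde C} \le \abs{C_i}$ supplies the matching upper direction. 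If the statement is needed for all leaders at once, a union bound over the at most $\abs{L} = O(n/2^{\sqrt{\log n}})$ possible leader addresses costs only a factor $n$ and preserves the \whp guarantee.

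I expect the main obstacle to be the independence claim rather than the concentration step: one must argue carefully that the ``well-timed'' event defining $C$ is a function solely of $v$'s fresh phase-$3$ coin flips and the (already fixed) infection times of its leaders, so that it neither couples the members of $C_i$ to one another nor to the pointer-jumping process that produced $C_i$ and $G$. Were the identity of $v$'s leaders and the timing of their infection themselves random in a way correlated across $C_i$, the indicators would cease to be independent and a Martingale/Azuma argument (as used in Lemma \ref{no-followers}) would be required instead.
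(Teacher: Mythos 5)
There is a genuine gap, and it sits exactly where the paper's own proof does its work. You reduce the claim to the phase-3 ``well-timed'' event by asserting that every $v \in C_i$ may be assumed to lie in $G$, i.e.\ that \emph{both} of $v$'s stored leader addresses are already known to be shared by $\Omega(2^{\sqrt{\log n}}/\log n)$ connectors. That assumption begs the question: Lemma \ref{no-followers} only guarantees that a \emph{constant fraction} of all connectors have this property, not that all (or even most) members of a particular $C_i$ do. Membership of $v\in C_i$ in $\tilde{C}$ requires that $v$'s \emph{other} leader address --- the one obtained in a different sub-phase --- is also widely shared, and whether this holds for a constant fraction of $C_i$ is precisely the content of the observation. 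These events are correlated across the members of $C_i$ (many of them could, in another sub-phase, fall into the same small set $C_j$), so your ``mutually independent Bernoulli$(p)$'' claim fails; the phase-3 coin flips are the easy, genuinely independent part, but they are not the whole story.

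The paper's proof handles the correlation as follows: by Lemma \ref{upper-bound} every such set has size $O(2^{3.1\sqrt{\log n}})$, hence for any other set $C_j$ one has $\abs{C_i \cap C_j} \le 4$ with probability at least $1-n^{-2}$; conditioning on this, the martingale from Lemma \ref{no-followers}, restricted to the nodes of $C_i$, satisfies a $4$-Lipschitz condition, and Azuma--Hoeffding gives concentration of $\abs{C_i\cap\tilde{C}}$ around its mean $\Theta(\abs{C_i})$. You do anticipate in your last paragraph that a martingale might be needed if independence fails, but you locate the possible correlation in the infection times of the leaders rather than in the phase-2 structure of the ``other'' address, and your main argument dismisses the issue by conditioning it away. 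To repair the proof you would need to supply the intersection bound and the Lipschitz martingale step (or an equivalent decoupling argument); the Chernoff bound alone does not suffice.
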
 
The proof of this observation follows from the fact that if two nodes share the same 
address after a certain subphase, then each of these nodes will share with 
probability $1-o(1)$ a leader address obtained 
in some other subphase with at least $\Omega(2^{\sqrt{\log n}} / \log n)$ other connectors.
However, these events are not independent. Let now $C_j$ be some other set, which 
contains a node $v \in C_i$. Since $|C_i|, |C_j| = O(2^{3.1 \sqrt{\log n}})$ 
(see Lemma \ref{upper-bound}), there will be with probability at least $1-n^{-2}$ at most
$4$ nodes in $C_i \cap C_j$. Conditioning on this, we apply 
for the nodes of $C_i \cap C$ the same Martingale sequence 
as in the proof of Lemma \ref{no-followers}. By taking into account that 
in this case the Martingale sequence 
satisfies the $4$-Lipschitz condition (the nodes of $C_i$ are 
part of the Martingale only), we obtain the statement of the observation.
%independently of each other. Applying now simple Chernoff bounds 
%(e.g.~\cite{HR90}), we obtain the statement.

Now we are ready to show the following lemma. % (due to a space limitation the proof is deferred to the Appendix).
\begin{lemma}
\label{lem-growth}
After the third phase the number of informed nodes is at least $\frac{n}{2^{7\sqrt{\log n}}}$, \whp
\end{lemma}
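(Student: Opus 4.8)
The plan is to analyze Phase~3 as a two-round branching process carried by the informed \emph{leaders}, and to show that this process grows geometrically with ratio $2^{\sqrt{\log n}/2}$ per two rounds until it reaches the target density. Let $\mathcal{L}_t$ denote the set of leaders that become informed during round $t$ of Phase~3 (the current ``frontier'') and let $J_t$ be the total number of informed nodes so far. At the start of the phase the $\Theta(\log^2 n)$ nodes informed in Phase~0 are, up to a constant fraction, good connectors (members of $\tilde{C}$), so in the first round they \push\ the rumor to $\Omega(\log^2 n)$ distinct leaders, seeding the process with $\abs{\mathcal{L}_1} = \Omega(\log^2 n)$. The invariant I would carry across two-round blocks is that $J_t$ is within a $\mathrm{polylog}$ factor of $\abs{\mathcal{L}_t}\cdot 2^{\sqrt{\log n}}$ (informed connectors dominate the count, since each informed leader drags along its $\Theta(2^{\sqrt{\log n}})$ followers), and that we are still below target, $J_t \le n/2^{7\sqrt{\log n}}$.

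The core is a single two-round growth step, to be shown to hold \whp\ as long as $J_t \le n/2^{7\sqrt{\log n}}$. In the first round (the \pull\ round) each leader $u \in \mathcal{L}_t$ transmits to every connector that opened a channel to it; since $u$'s address is shared by $\Omega(2^{\sqrt{\log n}}/\log n)$ connectors (Lemma~\ref{no-followers}) and each such connector selects $u$ among its at most $5$ stored leaders with constant probability, $u$ informs $\Omega(2^{\sqrt{\log n}}/\log n)$ new connectors. By Observation~\ref{obs_distr_good} a constant fraction of these lie in $\tilde{C}$ and therefore store a second leader address, itself shared by $\Omega(2^{\sqrt{\log n}}/\log n)$ connectors. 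In the second round (the \push\ round) each such newly informed good connector pushes the rumor to that second leader, producing the next frontier $\mathcal{L}_{t+2}$. Two losses must be controlled: \emph{collisions}, where several connectors push to the same leader, and \emph{staleness}, where the second leader is already informed. Collisions are bounded because no leader address is shared by more than $O(2^{3.1\sqrt{\log n}})$ connectors (Lemma~\ref{upper-bound}). For staleness I would invoke the invariant: at most $J_t \le n/2^{7\sqrt{\log n}}$ leaders are informed, a $2^{-\Omega(\sqrt{\log n})}$ fraction of all $\Theta(n/2^{\sqrt{\log n}})$ leaders, and the second-leader assignments were fixed back in Phase~2 independently of which nodes were later informed, so only a $2^{-\Omega(\sqrt{\log n})}$ fraction of the new connectors push to an already-informed leader. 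Combining, $\abs{\mathcal{L}_{t+2}} = \Omega\!\left(\abs{\mathcal{L}_t}\cdot 2^{\sqrt{\log n}}/\log n\right) \ge 2^{\sqrt{\log n}/2}\abs{\mathcal{L}_t}$, using $2^{\sqrt{\log n}}/\log n \ge 2^{\sqrt{\log n}/2}$ for large $n$; this is the claimed ratio.

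Concentration is the delicate point, and I would treat it exactly as in Lemma~\ref{no-followers} and Observation~\ref{obs_distr_good}. Each of the three counts above — followers reached per leader, good connectors among them, and fresh (uninformed) second leaders — is a sum of indicators that are \emph{not} independent, because connectors share leader addresses and share the reachability sets $R(\cdot)$. I would expose the relevant edges connector by connector and apply the Azuma--Hoeffding inequality, bounding the Lipschitz constant of the martingale by $O(2^{3.1\sqrt{\log n}})$ via Lemma~\ref{upper-bound} (at most a constant number of the frontier connectors point to any fixed leader, as already argued for Observation~\ref{obs_distr_good}). This delivers each two-round step \whp, and a union bound over the $O(\sqrt{\log n})$ blocks preserves \whp.

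Finally I would iterate. Starting from $\abs{\mathcal{L}_1} = \Omega(\log^2 n)$ and multiplying by $2^{\sqrt{\log n}/2}$ every two rounds, the frontier of informed leaders reaches $\Theta(n/2^{8\sqrt{\log n}})$ — equivalently $J \ge n/2^{7\sqrt{\log n}}$ informed nodes — after $k$ blocks with $k\cdot\tfrac{\sqrt{\log n}}{2} = \log n - O(\sqrt{\log n})$, i.e. $k = (2+o(1))\sqrt{\log n}$, hence within $(4+o(1))\sqrt{\log n} \le c\sqrt{\log n}$ rounds once $c$ is large enough. I expect the main obstacle to be making the staleness argument rigorous under the dependencies: one must show that the connectors newly informed in a block do not disproportionately point to the few already-informed leaders. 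This is precisely where the gap between the true per-block ratio $\approx 2^{\sqrt{\log n}}$ and the claimed $2^{\sqrt{\log n}/2}$ provides the needed slack, and where the bounded-sharing bound of Lemma~\ref{upper-bound} is essential to keep the martingale differences small enough for the union bound to go through.
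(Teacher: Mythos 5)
Your proposal follows essentially the same route as the paper: an induction over two-round blocks of Phase~3 with growth ratio $2^{\sqrt{\log n}/2}$, seeded by the $\log^2 n$ nodes from Phase~0, driven by Lemma~\ref{no-followers}, Lemma~\ref{upper-bound} and Observation~\ref{obs_distr_good}, and with dependencies handled by the same Azuma/balls-into-bins machinery. The only real difference is bookkeeping: you track the frontier of newly informed \emph{leaders}, while the paper tracks the leader-groups $C_v^{(j)}\cap\tilde{C}$ and partitions them into ``partially informed'' sets $E^{(j)}(i)$ and ``exhausted'' sets $F^{(j)}(i)$ --- two dual views of the same process.

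One caveat on the step you yourself flag as the obstacle. Your staleness argument as written --- ``the second-leader assignments were fixed in Phase~2 independently of which nodes were later informed'' --- is not literally valid, since the set of informed leaders at time $t$ is itself a function of those Phase-2 assignments, so no such independence holds. The paper's device for this is worth noting: it maintains as part of the induction hypothesis that every leader-group not yet fully exhausted contains only $O(\log n)$ informed connectors, and restores this invariant each block by a balls-into-bins count (the $O(n/2^{3.9\sqrt{\log n}})$ newly informed connectors are distributed over $\Omega(n/2^{3.1\sqrt{\log n}})$ candidate second-leader groups, so no group receives more than $O(\log n)$ of them, \whp). With that invariant in place, the slack between the true per-block gain $\Theta(2^{\sqrt{\log n}}/\log^2 n)$ and the claimed $2^{\sqrt{\log n}/2}$ absorbs all losses, exactly as you anticipate.
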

\begin{proof}%[of Lemma \ref{lem-growth}]
For a node $v \in \tilde{C}$, let $C_v^{(1)}$ and $C_v^{(2)}$ represent two sets of nodes, which
store the same leader address as $v$ (obtained in the same 
sub-phases of the second phase), and for which we have 
$\abs{C_v^{(1)}}, \abs{C_v^{(2)}} = \Omega(2^{\sqrt{\log n}} / \log n)$.  
We know that each node has exactly $2$ leader addresses. Since after phase $0$ 
at least $\log^2 n$ nodes are informed, we may assume that 
at the beginning of this phase a node $w \in \tilde{C}$ is informed, and $w$ pushes the message 
exactly once. That is, after two steps 
all nodes of $C_w^{j} \cap \tilde{C}$ are informed, where $j$ is either $1$ or $2$
(we may assume w.l.o.g.~that $j=1$). Furthermore, 
we assume that these are the only nodes which are informed after the second step.

Now, we show by 
induction that the following holds. After $2i$ steps, the number of informed nodes $I(i)$ in 
$\tilde{C}$ is at least $\min\{2^{\sqrt{\log n} \cdot i/2}, n/2^{7\sqrt{\log n}}\}$, \whp 
Furthermore, 
there is a partition of the set $\{C_v^{(j)} \cap \tilde{C}~|~v \in I(i),~j \in \{ 1,2\}\}$, 
into the sets $E^{(j)}(i)$ and $F^{(j)}(i)$, where 
$E^{(j)}(i)$ are the sets $C_v^{(j)} \cap \tilde{C}$ with $\abs{C_v^{(j)} \cap \tilde{C} \cap I(i)} = 
O(\log n)$, and $F^{(j)}(i)$ are the sets $C_v^{(j)} \cap \tilde{C}$ with $C_v^{(j)} 
\cap \tilde{C} \cap I(i) = C_v^{(j)} \cap \tilde{C}$. Roughly speaking, the sets 
belonging to $E^{(j)}(i)$ contain some nodes, which have just been informed in the last time step, 
and most of the nodes from these sets are still uninformed. If now these nodes perform \push, 
and in the next step 
the nodes of the sets in $E^{(j)}(i)$ a \pull, then these nodes become informed as well.
%Note that for any $E_v^{(j)}(i)$ we have 
%$j \equiv i (\mbox{mod } 2$. 
Our assumption is that the number of sets $E_v^{(j)}(i)$ 
is $\Omega(\abs{I(i)}/\log n)$, \whp
This obviously holds before the first or
after the second step. 

Assume that the induction hypothesis holds after step $2i$ and we are 
going to show that it also holds after step $2(i+1)$. 
Clearly, if $U$ is some set of nodes 
which have the same leader address after an arbitrary but fixed subphase of the second phase, where 
$|U| = \Omega(2^{\sqrt{\log n}} / \log n)$, then we have $|U \cap \tilde{C}| = \Theta(|U|)$, \whp
(see Observation \ref{obs_distr_good}). 
%This follows from 
%a simple application of a balls into bins result, where the number of balls is much higher the the 
%number of bins \cite{RS98}. 
On the other hand, there are at least $\Omega(n/2^{3.1 \sqrt{\log n}})$
such sets $U$ with $U \not\in \cup_{j=1,2} F^{(j)}(i)$, \whp, since the 
largest set we can 
obtain has size $O(2^{3.1\sqrt{\log n}})$, \whp (cf. Lemma \ref{upper-bound}). According to our 
induction hypothesis, 
at least $\Omega(|I(i)|/\log n)$ and at most $O(|I(i)|)$ 
of these sets are elements of $E^{(j)}(i)$, where $v \in I(i)$. 

Clearly, a node $v \in \tilde{C} \setminus I(i)$ will be in at most one of these sets, \whp
Since any of these sets accomodates at least $\Theta(2^{\sqrt{\log n}}/\log n)$ nodes from 
$\tilde{C}$, \whp,
the number of informed nodes increases within two steps by at least a factor of 
$\Theta(2^{\sqrt{\log n}}/\log^2 n) \gg 2^{\sqrt{\log n} /2} ,$
which leads to $|I(i+1)| \geq 2^{\sqrt{\log n} \cdot (i+1)/2}$, \whp The induction step can be 
performed as long as $|I(i)| \leq n/2^{7\sqrt{\log n}}$. 
Now we concentrate on the distribution of these nodes among the sets $U \not\in \{ 
E_v^{(j)}(i)~|~v \in I(i),~j \in \{ 1,2\}\}$. Note that each such node belongs to two sets; 
one of these sets is an element of $E_v^{(j)}(i)$ for some $v \in I(i)$, while the other one is not. 
Since the total number of nodes in some set of $E^{(j)}(i)$ is $O(2^{3.1 \sqrt{\log n}})$, \whp, we have 
$|I(i+1)| = O(2^{3.1 \sqrt{\log n}} \cdot |I(i)|) = O(n/2^{3.9 \sqrt{\log n}}) .$
As argued above, there are at least $\Omega(n/2^{3.1 \sqrt{\log n}})$
sets $U$ with $U \not\in \{ F_v^{(j)}(i+1)~|~v \in I(i+1),~j \in \{ 1,2\}\}$, \whp, where 
$U$ is some set of nodes 
which have the same leader address after an arbitrary but fixed subphase of the second phase, and 
$|U| = \Omega(2^{\sqrt{\log n}} / \log n)$. Thus,  a node $v \in (I(i+1) \setminus I(i)) \cap \tilde{C}$
is assigned to a fixed such $U$ with probability $O(1/|I(i+1)|)$. 
Therefore, none of the sets $E_v^{(j)}(i+1)$ will accomodate more than $O(\log n)$ nodes from 
$(I(i+1) \setminus I(i)) \cap \tilde{C}$, \whp\cite{raab1998balls}, and the claim follows.
%Again, the total number of nodes in a set $E_v^{(j)}(i+1)$ is at most $O(2^{2\sqrt{\log n}})$, 
%and there are 
%Clearly, there are at most $I(i)$ such sets. The probability that a node $v \in \tilde{C} \setminus I(i)$
%is contained in two such sets 
\qed
\end{proof}

Now we are ready to prove our main theorem, which also compares the communication overhead of the 
usual \ex algorithm of \cite{karp2000randomized} to our algorithm. Note that the {\bf bit} 
communication complexity of \cite{karp2000randomized} w.r.t.~one rumor 
is $O(n b \cdot \log \log n)$, \whp, where $b$ is the bit length of that rumor. We should also 
mention here that in \cite{karp2000randomized} the authors assumed that messages (so called 
updates in replicated data-bases) are frequently generated, and thus the cost of opening 
communication channels amortizes over the cost of sending messages through these channels.
If in our scenario messages are frequently generated, 
then we may also assume that the cost of the pointer jumping phase is negligable 
compared to the cost of sending messages, and thus the communication overhead in our case 
would also be $O(n b \log \log n)$. In our theorem, however, we assume that one message has 
to be distributed, and sending the IP-address of a node through a communication channel 
is $O(\log n)$. Also, opening a channel without sending messages generates an
$O(\log n)$ communication cost.

\begin{theorem}
\label{main_theo}
At the end of the JPP algorithm, all but O(F) nodes are informed \whp, where 
$F$ is the number of failed nodes as described above. 
The algorithm has running time $O(\sqrt{\log n})$
and produces a {\bf bit} communication complexity of 
$O(n (\log^{3/2} n + b \cdot \log \log n))$, \whp, where $b$ is the 
bit length of the message.  
\end{theorem}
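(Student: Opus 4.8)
The plan is to assemble the theorem from three ingredients that are already in place: the deterministic per-phase round counts, the lower bound of $n/2^{7\sqrt{\log n}}$ informed nodes entering phase $4$ (Lemma \ref{lem-growth}), and the known guarantees of the median-counter \ex algorithm of \cite{karp2000randomized}; the remaining work is to glue these together and to do the bit-accounting. I would first dispose of the running time, which is immediate: phases $0$ through $4$ have fixed lengths $c\log\log n$, $O(1)$, $5c\sqrt{\log n}$, $c\sqrt{\log n}$, and $c\sqrt{\log n}$, so the total is $O(\sqrt{\log n})$ deterministically.

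For correctness I would treat phase $4$ as a \emph{warm-started} run of the \ex algorithm of \cite{karp2000randomized} on the complete graph. Conditioning (w.h.p., via Lemma \ref{lem-growth}) on at least $n/2^{7\sqrt{\log n}}$ informed nodes at the start of phase $4$, I would invoke the standard two-regime analysis: while the informed fraction is below a constant, \push grows it by a constant factor per round, so it climbs from $2^{-7\sqrt{\log n}}$ to $\Theta(1)$ in $O(\sqrt{\log n})$ rounds; once a constant fraction is informed, the uninformed count squares down under \pull, vanishing after $O(\log\log n)$ further rounds. Both regimes tolerate the $F = O(n/2^{\sqrt{\log n}}) = o(n)$ failed nodes, which act only as a vanishing fraction of non-responding call targets. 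The key point is that phase $4$ uses uniform random calls rather than the leader/connector structure, so \emph{every} truly non-failed node participates — including nodes that were ``considered failed'' during pointer jumping merely because their $R(v)$ met a dead node — and thus becomes informed \whp. Hence the uninformed set at the end is contained in the set of genuinely failed nodes, giving ``all but $O(F)$''; a union bound over the constantly many phase-level high-probability events preserves the $1-n^{-1-\Omega(1)}$ guarantee.

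For the bit complexity I would classify every transmitted object into two types. Channel openings and forwarded addresses each cost $O(\log n)$ bits, and there is at most one per node per round across the $O(\sqrt{\log n})$ rounds of phases $2$, $3$, and $4$; this gives $O(n\sqrt{\log n})$ such objects and hence $O(n\log^{3/2}n)$ bits. The rumor itself costs $b$ bits per transmission, and its transmission count is dominated by phase $4$, which inherits the $O(n\log\log n)$ bound of \cite{karp2000randomized}. Phase $0$ contributes only $O(\log^2 n)$ rumor transmissions, and phase $3$ only $o(n)$: at most $n/2^{7\sqrt{\log n}}$ nodes are informed there, and each informed leader is incident to only $O(2^{3.1\sqrt{\log n}})$ connectors by Lemma \ref{upper-bound}, so the leaders' \pull-sends total $O(n\,2^{-3.9\sqrt{\log n}}\sqrt{\log n}) = o(n)$. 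Thus the rumor contributes $O(nb\log\log n)$ bits, and summing the two types yields the claimed $O(n(\log^{3/2}n + b\log\log n))$.

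The hard part will be the phase-$4$ reduction: one must check that the median-counter analysis of \cite{karp2000randomized}, designed for a single source on the failure-free clique, still applies from a $2^{-7\sqrt{\log n}}$-fraction warm start and under an $o(n)$-fraction of oblivious failures, so that completion stays within $c\sqrt{\log n}$ rounds and the informed set captures all non-failed nodes \whp. This in turn relies on the robustness of \push growth and \pull shrinkage to the (structured) distribution of the initial informed set produced by phase $3$. By contrast, the bit-complexity bookkeeping is routine once the $O(\log n)$-bit and $b$-bit objects are separated, as above.
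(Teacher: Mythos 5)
Your overall route matches the paper's: running time read off from the fixed phase lengths, correctness by restarting the median-counter \ex algorithm of \cite{karp2000randomized} from the $n/2^{7\sqrt{\log n}}$-fraction warm start guaranteed by Lemma \ref{lem-growth}, and bit complexity by separating $O(\log n)$-bit address/channel objects (giving $O(n\log^{3/2}n)$) from $b$-bit rumor transmissions (giving $O(nb\log\log n)$ via the $O(n\log\log n)$ transmission bound of \cite{karp2000randomized}). The bookkeeping parts are fine and essentially identical to the paper's.

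However, the step you yourself flag as ``the hard part'' --- justifying that the median-counter analysis still applies from a $2^{-7\sqrt{\log n}}$-fraction warm start --- is exactly the content of the paper's proof, and you leave it unproven. Your sketch (``\push grows the informed set by a constant factor per round, then \pull squares down the uninformed set'') describes plain \ex, but phase $4$ runs the \emph{median-counter} algorithm, in which nodes advance through states $B_1,\dots,B_{ctr_{\max}}$ with $ctr_{\max}=O(\log\log n)$ and eventually stop transmitting; since the growth phase here lasts $\Theta(\sqrt{\log n})=\omega(\log\log n)$ rounds, one must rule out that counters saturate and nodes go quiescent before the rumor saturates the graph. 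The paper closes this gap with a coupling argument: it locates the round $t_0$ of the \emph{original} single-source median-counter process at which exactly $|I(t_0)|$ nodes are informed, couples the random choices of both processes from that round on, and observes that the JPP process dominates because all of its informed nodes enter phase $4$ in the freshest state $B_1$, whereas the original process may already contain nodes in states $B_j$ with $j>1$, $C$, or $D$ that stop sending earlier; Theorem 3.1 of \cite{karp2000randomized} then transfers both the $O(\sqrt{\log n})$ completion time and the $O(n\log\log n)$ rumor-transmission bound. Without this (or an equivalent re-derivation of the exponential-growth and quadratic-shrinking phases from an arbitrary $B_1$-initialized warm start), the correctness and the $O(nb\log\log n)$ term both rest on an unverified invocation, so the proposal as written has a genuine gap at its central step.
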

\begin{proof}
In the fourth phase we apply the (median counter) algorithm presented 
in \cite{karp2000randomized}. 
%We know that
%At the end of the third phase, there are at 
%least $n/2^{7 \sqrt{\log n}}$ informed nodes, 
%\whp (cf.~Lemma \ref{lem-growth}). Clearly, the communication overhead w.r.t.~the rumor 
%is $O(n \cdot b)$ 
%in the third phase.
%, since 
%each connector transmits at most twice the message, and the number of leaders is 
%bounded by $O(n/2^{\sqrt{\log n}})$, \whp
%Then, the median counter algorithm informs all nodes within $O(\sqrt{\log n})$ steps, \whp, and 
%the number of message transmissions is $O(n \log \log n)$ \cite{karp2000randomized}, leading to 
%a bit complexity of $O(nb \cdot \log \log n)$ in the fourth phase.
%%%%%%BEGIN REMOVE FOR CAMERA-READY%%%%%%
%%%NO REMOVE FOR JOURNAL VERSION%%%%%%%%%
For the sake of completeness, we describe this algorithm here as given in 
\cite{karp2000randomized}. There, each node can be in a state called $A$, $B$, $C$, or $D$. State 
$B$ is further 
subdivided in substates $B_1$, \dots, $B_{ctr_{\max}}$, where $ctr_{\max} = O(\log \log n)$ 
is some suitable integer. At the beginning of this phase, all informed nodes are in state $B_1$ and 
all uninformed nodes are in state $A$. The rules are as follows:
\begin{itemize}
\item If a node $v$ in state $A$ receives the rumor only from nodes in state $B$, then 
it switches to state 
$B_1$. If $v$ obtains the rumor from a state $C$ node, then it switches to state $C$.
\item If a node $v$ in state $B_i$ communicates with more nodes in some state $B_j$ with 
$j \geq i$ than with nodes in state $A$ or $B_{j'}$ with $j' < i$, 
then
$v$ switches to state $B_{i+1}$. If $v$ gets the rumor from a state $C$ node, then it switches to 
state $C$.
\item A node in state $C$ sends the rumor for $O(\log \log n)$ further steps. Then, it switches to state 
$D$ and stops sending the rumor. 
\end{itemize}
We know that at the end of the third phase, there are at least $n/2^{7 \sqrt{\log n}}$ informed nodes, 
\whp (cf.~Lemma \ref{lem-growth}). In order to apply Theorem 3.1 of \cite{karp2000randomized}, we have 
to couple 
the original median counter algorithm with our algorithm. Let $I(t_0)$ be the set of informed nodes 
at the end of the third phase. Clearly, the communication overhead w.r.t.~the rumor is $O(n \cdot b)$ 
in the third phase, since 
each connector transmits at most twice the message, and the number of leaders is 
bounded by $O(n/2^{\sqrt{\log n}})$, \whp Then, there is a time step in the original median counter
algorithm such that the number of informed nodes is $|I(t_0)|$ too\footnote{The time step, in which 
more than $|I(t_0)|$ are informed for the first time, is subdivided, such that we have a time 
step, in which there are exactly $|I(t_0)|$ nodes informed.}.
%(we ignore  
%variations, which are bounded by constant factors). 
Obviously, there might exist 
nodes at this time step, which are in some state $B_j$, with $j > 1$, $C$, or $D$. 

At this time step, we couple the random choices of the nodes in the two algorithms. As long as
$|I(i)| \leq n/\log^2 n$, it holds that $|I(i+1)| > (1+\epsilon) |I(i)| $, \whp (see exponential 
growth phase in Theorem 3.1, \cite{karp2000randomized}), for some constant 
$\epsilon > 0$, and the number of informed nodes (as well as the constant $\epsilon$)
produced by our algorithm dominates the 
number of infomed nodes in the original median counter algorithm. This holds since at time step 
$t_0$ we only have state $B_1$ or $A$ nodes in our algorithm, 
while the original median counter algorithm may 
contain state $B_j$ and $C$ nodes at that time step, where $j>1$. Therefore, these nodes will stop
earlier sending the message. When $|I(i)| \geq n/\log^2 n$ for the first time, the communication 
overhead w.r.t.~the rumor is bounded by $O(n \cdot b)$. 

Once the message is distributed to $n/\log^2 n$ nodes, one needs $O(\log \log n)$ additional steps 
to disseminate the rumor among all vertices of the graph (see quadratic shrinking phase in 
Theorem 3.1, \cite{karp2000randomized}). Moreover, all nodes stop sending the rumor after 
$O(\log \log n)$ 
additional steps, once all nodes are informed (cf.~Theorem 3.1, \cite{karp2000randomized}). 
Thus, the total communication overhead w.r.t.~the rumor is bounded 
by $O(n b \cdot \log \log n)$, \whp

%%%%%%%END REMOVE%%%%%%%
The communication overhead w.r.t.~the addresses sent by the nodes in the pointer jumping phase is 
upper bounded by $O(n \sqrt{\log n} \cdot 
\log n)$, where $\sqrt{\log n}$ stands for the number of steps in 
the second phase, while the $\log n$ term describes the bit size of a message  
(an address is some polynomial in $n$).
\qed
\end{proof}

\section{Discussion - Non-exact Case}
\label{disc}

As mentioned in Section \ref{algo}, a modified version of our algorithm also 
works if the nodes only have an estimate of $\log n$, which is accurate up to some 
constant factor. In this case, we introduce some dummy sub-phases between any two phases and 
any sub-phases of phase $2$. Now, for a node $v$ the length of sub-phase $i$ of phase $2$ will be 
$\rho^{2i}c \sqrt{\log n_v}$, and between sub-phase $i$ and $i+1$, there will be 
a dummy sub-phase of length $\rho^{2i+1} c \sqrt{\log n_v}$. Here $n_v$ is the estimate 
of $n$ at node $v$.
Accordingly, the dummy sub-phase between 
phase $1$ and $2$ will have length $\rho c \sqrt{\log n_v}$, between phases $2$ and $3$ 
length $\rho^{11} c \sqrt{\log n_v}$, and between $3$ and $4$ length $\rho^{13} c \sqrt{\log n_v}$.
The length of phase $3$ will be $\rho^{12} c \sqrt{\log n_v}$, and that of 
phase $4$ will be $\rho^{14} c \sqrt{\log n_v}$. Here $\rho$ will be a large constant, such that 
$\rho^i \gg \sum_{j=0}^{i-1} \rho^{j}$ for any $i < 15$. Furthermore, 
\begin{eqnarray*}
\sum_{j=0}^{i} \rho^j c \min_{v \in V} \sqrt{\log n_v} &\gg&
\sum_{j=0}^{i-1} \rho^j c \max_{v\in V} \sqrt{\log n_v} + \\
&& c \max_{v\in V} \sqrt{\log n_v} ,
\end{eqnarray*}
where $i \in \{ 1, \dots , 15\}$. 

The role of the dummy sub-phases is to synchronize 
the actions of the nodes. That is, no node will enter a phase or sub-phase before 
the last node leaves the previous phase or sub-phase. Accordingly, no node will leave a phase 
or a sub-phase, before the last node enters this phase or sub-phase. Moreover, the whole 
set of nodes will be together for at least $c \sqrt{\log n}$ steps in every phase or sub-phase.
This ensures that all the phases and sub-phases of the algorithm will work correctly, and 
lead to the results we have derived in the previous section. Note that, however, the 
communication overhead might increase to some value $O(n(\log^{3/2} n + b \sqrt{n})$. 

%These dummy sub-phases have length $\rho^i 
%c \sqrt{\log n_v}$, where $n_v$ 
%is now the approximation of $n$ at node $v$, 
%which may differ from node to node. In a dummy phase, the 
%nodes simply wait and do nothing, in order to synchronize their actions with the other nodes.
%The constant $\rho$ is chosen in such a way that $c \max_{v,w}|\sqrt{\log n_v} - \sqrt{\log n_w}|
%<< \rho c \max_{u\in V} \log n_u$. 

%\section{Conclusion}

%\section{Acknowledgment}  
%Dagstuhl Seminar and Christian talk on pointer Push\&Pull

\bibliographystyle{acm}
\bibliography{rumor}

\appendix
%\label{appendix}
%\section*{Appendix}

%\subsection*{A Proof of Lemma \ref{lem-growth}}

\end{document}